\newcommand{\LTLEVENTUALLY}{\ensuremath{ F}}
\newcommand{\LTLALWAYS}{\ensuremath{ G }}
\newcommand{\FUNCTION}[2]{\STATE \textbf{function} #1(#2)}
\newcommand{\st}{\sigma_{\tau}}
\newcommand{\pt}{\psi}
\newcommand{\St}{\mathcal{S}_{\tau}}
\newcommand{\Mt}{\mathcal{M}_{\tau}}
\newcommand{\sigt}{s^{t:t+\tau}}
\newtheorem{definition}{Definition}
\newtheorem{assump}{Assumption}
\newtheorem{example}{Example}
\newtheorem{proposition}{\bf{Proposition}}
\newtheorem{problem}{Problem}
\newcommand{\tuple}[1]{\ensuremath{\left \langle #1 \right \rangle }}
\title{\LARGE \bf
Robust Satisfaction of Temporal Logic Specifications via Reinforcement Learning
}
\author{Austin Jones$^{1}$, Derya Aksaray$^{2}$, Zhaodan Kong$^{3}$,
  Mac Schwager$^{4}$, and Calin Belta$^{2,5}$ % <-this % stops a space
  \thanks{*This work was partially supported at Boston University by
    ONR grant number N00014-14-1-0554 and by
    the NSF under grant numbers NRI-1426907, CMMI-1400167, and CNS-1035588.}% <-this % stops a space
    \thanks{$^{1}$Author is with Mechanical Engineering and Electrical Engineering,
    Georgia Institute of Technology, Atlanta, GA, USA. austinjones@gatech.edu}%
  \thanks{$^{2}$Authors are with Mechanical Engineering, Boston
    University, Boston, MA, USA. \{cbelta,daksaray\}@bu.edu}%
  \thanks{$^{3}$Author is with Mechanical and Aerospace Engineering,
    University of California Davis, Davis, CA,
    USA. zdkong@ucdavis.edu}%
     \thanks{$^{4}$ Author is with  Aeronautics and Astronautics, Stanford University,
    Stanford, CA, USA. schwager@stanford.edu}%    
    \thanks{$^{5}$Author is with Systems Engineering, Boston
    University, Boston, MA, USA.}%
}
\begin{document}
\maketitle
\thispagestyle{empty}
\pagestyle{empty}

\begin{abstract}
  We consider the problem of steering a system with unknown,
  stochastic dynamics to satisfy a rich, temporally-layered task given
  as a signal temporal logic formula. We represent the system as a Markov decision process in which the states are built from a partition of the statespace and the transition probabilities are unknown. We present provably convergent reinforcement learning
  algorithms to maximize the probability of satisfying a given
  formula and to maximize the average expected robustness,
  i.e., a measure of how strongly the formula is satisfied.
  We demonstrate via a pair of robot navigation
  simulation case studies that reinforcement learning with robustness
  maximization performs better than probability maximization in terms
  of both probability of satisfaction and expected robustness.  
\end{abstract}

\section{Introduction}
We consider the problem of controlling a system with unknown,
stochastic dynamics, i.e., a ``black box'', to achieve a complex,
time-sensitive task. An example is controlling a noisy aerial vehicle
with partially known dynamics to visit a pre-specified set of regions
in some desired order while avoiding hazardous areas. We consider
tasks given as temporal logic (TL) formulae
\cite{baier2008principles}, an extension of first order Boolean logic
that can be used to reason about how the state of a system evolves
over time.  When a stochastic dynamical model is known, there exist
algorithms to find control policies for maximizing the probability of
achieving a given TL specification
\cite{Luna2014,lahijanian2015,Svorenova2015,Kamgarpour2011} by
planning over stochastic abstractions
\cite{Julius2009,Abate2011,lahijanian2015}.  However, only a handful
of papers have considered the problem of enforcing TL specifications
to a system with unknown dynamics.
Passive \cite{Brazdil2014} and active \cite{Sadigh2014RLTL,Fu2014TLRL}  
reinforcement learning has been used
to find a policy that maximizes the probability of satisfying a given
linear temporal logic formula.

In this paper, in contrast to the above works on reinforcement
learning which use propositional temporal logic, we use signal
temporal logic (STL), a rich predicate logic that can be used to
describe tasks involving bounds on physical parameters and time
intervals \cite{donze2010robust}. An example of such a property is
``Within $t_1$ seconds, a region in which $y$ is less than $\pi_1$ is
reached, and regions in which $y$ is larger than $\pi_2$ are avoided
for $t_2$ seconds." STL admits a continuous measure called
\emph{robustness degree} that quantifies how strongly a given sample
path exhibits an STL property as a real number rather than just
providing a $yes$ or $no$ answer
\cite{fainekos2009robustness,donze2010robust}. This measure enables
the use of continuous optimization methods to solve inference
(e.g., \cite{jin2013,Jones2014,kong2014temporal}) or formal synthesis
problems (e.g., \cite{Raman2014}) involving STL.
 
One of the difficulties in solving problems with TL formulae is the
history-dependence of their satisfaction. For instance, if the
specification requires visiting region A before region B, whether or
not the system should steer towards region B depends on whether or not
it has previously visited region A. For linear temporal logic (LTL) formulae with
time-abstract semantics, this history-dependence can be broken by
translating the formula to a deterministic Rabin automaton (DRA), a
model that automatically takes care of the history-dependent
``book-keeping'' \cite{ding2014,Sadigh2014RLTL}.  In the case of STL,
such a construction is difficult due to the time-bounded semantics.
We circumvent this problem by defining a fragment of STL such that the
progress towards satisfaction is checked with some finite number
$\tau$ of state measurements.  We thus define an MDP, called the
$\tau$-MDP whose states correspond to the $\tau$-step history of the
system.  The inputs to the $\tau$-MDP are a finite collection of
control actions.
  
We use a reinforcement learning strategy called $Q$-learning 
\cite{tsitsiklis1994}, in which a policy is constructed by taking actions,
observing outcomes, and reinforcing actions that improve  a given reward.  Our
algorithms  either maximize the probability of
satisfying a given STL formula, or maximize the expected robustness
with respect to the given STL formula. These procedures provably converge
 to the optimal policy for each case.  Furthermore, we
propose that maximizing expected robustness is typically more
effective than maximizing probability of satisfaction.  We prove that
in certain cases, the policy that maximizes expected robustness 
also maximizes the probability of satisfaction.  However,
if the given specification is not satisfiable, the probability
maximization will return an arbitrary policy, while the robustness
maximization will return a policy that gets as close to satisfying the
policy as possible.  Finally, we demonstrate through simulation case
studies that the policy that maximizes expected robustness in some
cases gives better performance in terms of both probability of
satisfaction and expected robustness when fewer training episodes are
available.

\section{Signal Temporal Logic(STL)}
\label{STL}

STL is defined with respect to continuously valued 
signals.  Let $\mathcal{F}(A,B)$ denote the set of mappings from $A$ to $B$  and define a \em 
signal \em  as a member of $\mathcal{F}(\mathbb{N},\mathbb{R}^n)$.  For a 
signal $s$, we denote $s^t$ as the value of $s$ at time $t$ and $s^{t_1:t_2}$ as the sequence of values $s^{t_1}s^{t_1+1}\ldots s^{t_2}$. 
Moreover, we denote $s[t]$ as the \em suffix \em from time $t$, i.e., $s[t] = \{s^{t'} | t' \geq t\}$.   

In this paper, the desired mission specification is described by an STL 
fragment with the following \em syntax \em:
\begin{equation}
\label{fragGram}
\begin{array}{rl}
\phi &:= F_{[0, T]} \psi | G_{[0, T]} \psi, \\
\psi &:= f(s) \leq d | \neg \varphi | \varphi_1 \wedge \varphi_2 | \varphi_1 U_{[a,b)} \varphi_2,
\end{array}
\end{equation}
where $T$ is a finite time bound, $\phi,\psi,$ and $\varphi$ are STL formulae,   $a$ and $b$ are non-negative real-valued constants, and  $f(s) < d$ is a predicate where $s$ is a 
signal, $f \in \mathcal{F}(\mathbb{R}^n,\mathbb{R})$ is a function, and $d \in \mathbb{R}$ is a constant.    
The Boolean operators $\neg$ and $\wedge$ are negation (``not'') and conjunction (``and''), respectively.  
The other Boolean operators are defined as usual. The temporal operators $F$, $G$, and $U$ 
stand for ``Finally (eventually)'' , ``Globally (always)'', 
and ``Until'', respectively.  Note that in this paper, we use a discrete-time version of STL rather than
the typical continuous-time formulation.

The \em semantics \em of STL is recursively defined as
\begin{equation*}
 \begin{array}{rll}
s[t] \models (f(s)< d) & \text{ iff } & f(s^t) < d \\
s[t] \models \phi_1  \wedge \phi_2  & \text{ iff } & s[t] \models \phi_1 
\text{ and } s[t] \models \phi_2 \\
s[t] \models \phi_1  \vee \phi_2  & \text{ iff } & s[t] \models \phi_1 
\text{ or } s[t] \models \phi_2 \\
s[t] \models  \LTLALWAYS_{[a,b)} \phi  & \text{ iff } & s[t'] \models 
\phi \\
&& \forall t' \in [t+a,t+b) \\
s[t] \models \LTLEVENTUALLY_{[a,b)} \phi & \text{ iff } &
\exists t'
\in [t+a,t+b)\\&& \text{s.t. } s[t'] \models 
\phi \\
s[t] \models \phi_1 U_{[a,b)} \phi_2 & \text { iff } & \exists t'
\in [t+a,t+b)\\&& \text{ s.t. } s[t''] \models \phi_1 \forall t'' \in [t,t') \\ & & 
\text{ and } s[t'] \models \phi_2.
 \end{array}
\end{equation*}
In plain English, $F_{[a,b)} \phi$ means ``within $a$ and $b$ time units in the future, $\phi$ is true,'' $G_{[a,b)} \phi$ means
``for all times between $a$ and $b$ time units in the future $\phi$ is true,'' and $\phi_1 U_{[a,b)} \phi_2$ means ``There exists a time  $c$ between $a$ and $b$ time units in the
future such that $\phi_1$ is true until $c$ and $\phi_2$ is true at $c$.''  
 STL is equipped with a \em robustness degree \em 
\cite{fainekos2009robustness,donze2010robust} (also called ``degree of 
satisfaction'') that quantifies how well a given signal $s$ satisfies a given formula $\phi$.  The robustness is calculated recursively according 
to the \em quantitative semantics \em
\begin{equation*}
 \begin{array}{rl}
 r(s,(f(s) < d),t) & = d-f(s^t) \\
 r(s,\phi_1 \wedge \phi_2,t) &= \min \big(r(s,\phi_1,t),r(s,\phi_2,t) \big) \\
 r(s,\phi_1 \vee \phi_2,t) &= \max \big( r(s,\phi_1,t),r(s,\phi_2,t) \big) \\
  r(s,\LTLALWAYS_{[a,b)} \phi ,t) & = \underset{t' \in
[t+a,t+b)}{\min} r(s,\phi ,t') \\
r(s,\LTLEVENTUALLY_{[a,b)} \phi ,t) & = \underset{t' \in
[t+a,t+b)}{\max}
r(s,\phi ,t'), \\
r(s,\phi_1 U_{[a,b)} \phi_2,t) & = \sup_{t' \in [t+a, t+b]} \Big(\min 
\big( r (\phi_2, s, t'),\\ &\inf_{t'' \in [t, t']} r(\phi_1, s, 
t'') \big) \Big).\\
 \end{array}
\end{equation*} 
We use $r(s,\phi)$ to denote $r(s,\phi,0)$. If $r(s,\phi)$ is large and 
positive, then $s$ would have to change by a large deviation in order to 
violate $\phi$.  Similarly, if $r(s,\phi)$ is large in absolute value and negative, then $s$ 
strongly violates $\phi$.

Similar to \cite{dokhanchi2014line}, let $hrz(\phi)$ denote the \emph{horizon 
length} of an STL formula $\phi$. The horizon length
is the required number of samples to resolve any (future or past) requirements 
of $\phi$. The horizon length can be computed recursively as
% (adopted 
% from \cite{dokhanchi2014line}):
\begin{equation}
\label{horizonDef}
\begin{array}{rl}
hrz(p) &= 0,\\
hrz(\neg \phi) &= hrz(\phi),\\
hrz(\phi_1 \vee \phi_2) & = \max \{hrz(\phi_1), hrz(\phi_2)\},\\
hrz(\phi_1 \wedge \phi_2) & = \max \{hrz(\phi_1), hrz(\phi_2)\},\\
hrz(\phi_1 U_{[a,b]} \phi_2) & = \max \{hrz(\phi_1)+b-1, hrz(\phi_2)+b\},
\end{array}
\end{equation}
where $\phi,\phi_1,\phi_2$ are STL formulae.
%where $p=f(s) \sim \pi$ is a predicate with $f$ as a scalar-valued function over a state $s \in S$, $\sim \in \{<, \leq, >, \geq \}$ and $\pi$ as a real number, and $\phi, \psi$ are STL formulas. Furthermore, we assume that the horizon length of $\psi$ $\tau = hrz(\psi) \ll T$.

\begin{example}
\label{running}
Consider the robot navigation problem illustrated in Figure
\ref{MDPEx}(a). The specification is ``Visit Regions $A$ or $B$ and
visit Regions $C$ or $D$ every 4 time units along a mission horizon of
100 units.'' Let $s(t) = \begin{bmatrix} x(t) & y(t) \end{bmatrix}^T$,
where $x$ and $y$ are the $x-$ and $y-$ components of the signal
$s$. This task can be formulated in STL as
 \begin{equation}
 \label{navSpec}
 \begin{array}{lll}
\phi =&  G_{[0,100)} & \psi \\
\psi =  & \Big( F_{[0,4)}  &  \big(
(x >2 \wedge x < 3 \wedge y >2 \wedge  y < 3) \\ 
&  &  \vee  (x >4 \wedge x < 5 \wedge y >4 \wedge y < 5) \big) \\ 
& \wedge F_{[0,4)} &   
\big( ( x >2 \wedge x < 3 \wedge y >4 \wedge  y < 5) \\  
& & \vee ( x >4 \wedge x < 5\wedge y >2 \wedge y < 3) \big) \Big). \\  
\end{array}
\end{equation}
%The conjunction of the inequalities of $x$ and $y$ each define the bounds of the regions A through D.  The disjunction between regions indicate that either could be visited.  The $F_{[0,4)}$ states that one of the two regions must be visited within 4 minutes.  The operator $G_{[0,T)}$ means that the inner property holds for all times $t <T$. 
 
Figure \ref{MDPEx}(a) shows  two trajectories of the system  beginning at the initial location of $R$ and ending in region $C$ that each 
 satisfies the inner specification $\psi$ given in \eqref{navSpec}.   Note that $s_2$
 barely satisfies $\psi$, as it only slightly penetrates region $A$,  while $s_1$ appears to satisfy it strongly, as it passes through 
 the center of region $A$ and the center of region $C$.  The
robustness degrees confirm this: $r(s_1,\psi) = 0.3$ while $r(s_2,\psi) = 0.05$.

 The horizon length of the inner specification $\psi$ of \eqref{navSpec} is 
 \begin{equation*}
  hrz(\psi) = \max \big( 4 + \max(0,0),4 + \max(0,0) \big)=4.
 \end{equation*}
\endproof
\end{example}

\section{Models for Reinforcement Learning}
\label{tmdp}
For a system with unknown and stochastic dynamics, a critical problem is how to synthesize control to achieve a desired behavior.
A typical approach is to discretize the state and action spaces of the system and then use a {\it reinforcement learning strategy},  i.e., by learning
how to take actions through trial and error interactions with an unknown environment \cite{sutton1998}.
In this section, we present models of systems that are amenable for reinforcement learning
to enforce temporal logic specifications.  We start with a discussion on the widely used LTL before introducing the particular model that we will use for reinforcement learning with STL.

\subsection{Reinforcement Learning with LTL}
One approach to the problem of enforcing LTL satisfaction in a
stochastic system is to partition the statespace and design
control primitives that can (nominally) drive the system from one
region to another.  These controllers, the stochastic dynamical
model of the system, and the quotient obtained from the partition are
used to construct a Markob decision process (MDP), called a bounded parameter MDP or BMDP,
whose transition probabilities are interval-valued \cite{Abate2011}.
These BMDPs can then be composed with a DRA constructed from a given
LTL formula to form a product BMDP.  Dynamic
programming (DP) can then be applied over this product MDP to generate a
policy that maximizes the probability
of satisfaction. Other approaches to this problem include aggregating
the states of a given quotient until an MDP can be constructed such
that the transition probability can be considered constant (with
bounded error) \cite{Lahijanian2012}.  The optimal policy can be
computed over the resulting MDP using DP \cite{Lahijanian2012b} or
approximate DP, e.g., actor-critic methods \cite{Ding2012}.

Thus, even when the stochastic dynamics of a system are known and the  logic
that encodes constraints has time-abstract semantics, the problem of constructing an abstraction
of the system that is amenable to control policy synthesis is difficult and computationally 
intensive.  Reinforcement learning methods  for enforcing
LTL constraints make the assumption that the underlying model under control is an MDP \cite{Brazdil2014,Sadigh2014RLTL,Fu2014TLRL}.   Implicitly, these procedures
compute a frequentist approximation of the transition probabilities that asymptotically approaches the true (unknown) 
value as the number of observed sample paths increases.  Since this algorithm doesn't explicitly
rely on any \em a priori \em knowledge of the transition probability, it could be applied to an
abstraction of a continuous-space system that is built from a proposition-preserving partition.
In this case, the uncertainty on the motion described by intervals in the BMDP that is reduced via computation
would instead be described by complete ignorance that is reduced via learning.  The resulting policy would
map regions of the statespace to discrete actions that will optimally drive the real-valued
state of the system to satisfy the given LTL specification.  Different partitions will result in different
policies.  In the next section, we extend the above observation to derive a discrete 
model that is amenable for reinforcement learning for STL formulae.

\subsection{Reinforcement learning with STL: $\tau$-MDP}
In order to reduce the search space of the problem, we partition the statespace of the system to form the quotient graph $\mathcal{G} = (\Sigma,E)$,
where $\Sigma$ is a set of discrete states corresponding to the regions of the statespace and $E$ corresponds to the
set of edges. An edge between two states $\sigma$ and $\sigma'$ exists in $E$ if and only if $\sigma$ and $\sigma'$ are neighbors 
(share a boundary) in the partition.
 In our case, since STL has time-bounded semantics, we cannot use an automaton with a time-abstract acceptance condition (e.g., a DRA) 
to check its satisfaction. In general, whether or not a given trajectory $s^{0:T}$ satisfies an STL formula
would be determined by directly using the qualitative semantics.  The STL fragment \eqref{fragGram} consists of a sub-formula $\psi$
with horizon length $hrz(\psi) = \tau$ that is modified by either a $F_{[0,T)}$ 
or $G_{[0,T)}$ temporal operator.  This means that in order to  update at time $t$ whether or not the given formula $\phi$ has been satisfied or violated, we can use
the $\tau$ previous state values $s^{t-\tau+1:t}$  For this reason, we choose to learn policies over an MDP with finite memory, called a $\tau$-MDP, 
whose states correspond to sequences of length $\tau$ of regions in the defined partition.

\addtocounter{example}{-1}
\begin{example}[cont'd]
Let the robot evolve according to the discrete-time Dubins dynamics
\begin{equation}
\begin{array}{c}
 x^{t+1} = x^{t} + v \delta^t \cos{\theta^t} \\
 y^{t+1} = y^{t} + v \delta^t \sin{\theta^t},
 \end{array}
\end{equation}
where $x^{t}$ and $y^{t}$ are the $x$ and $y$ coordinates of the robot at time $t$, 
$v$ is its forward speed, $\delta^t$ is a time interval,  and the robot's 
orientation is given by $\theta^t$.    The control primitives in this case are given
by $Act = \{up,down,left,right\}$ which correspond to the directions on the grid.  
Each (noisy) control primitive induces a distribution with support $\theta_{des} \pm 
\Delta\theta$, where $\theta_{des}$ is the orientation where the robot is facing the desired cell.  When a motion primitive is enacted, the robot
rotates to an angle $\theta^t$ drawn from the distribution and moves along that direction
for $\delta^t$ time units.
The partition of the statespace and the induced
quotient $\mathcal{G}$ are shown in Figures \ref{MDPEx}(b) and \ref{MDPEx}(c), respectively.  A state $\sigma_{(i,j)}$ in the quotient (Figure \ref{MDPEx}(c)) 
represents the region in the partition of the statespace (Figure \ref{MDPEx}(b)) with the point $(i,j)$ in the lower left hand corner.
\endproof
\end{example}
% 
% For the case in which the stochastic dynamics of the system are known and we wish to find a policy to maximize the probability
% of satisfying a given LTL formula, the typical approach is to discretize the system as above and construct a Markov decision process abstraction.
% 
% \begin{definition}
% A Markov Decision Process (MDP) is a tuple $M = (\Sigma,Act,P)$, where
% $\Sigma$ is a finite set of states, $Act$ is a set of actions, and $P: \Sigma \times Act \times \Sigma \to [0,1]$ 
% is a probabilistic transition relation such that if action $a$ is taken when the system
% is in state $\sigma_1$, it will transition to state $\sigma_2$ with probability $P(\sigma_1,a,\sigma_2)$. 
% \end{definition}

\begin{figure*}
\centering
 \begin{tabular}{ccc }
 \includegraphics[width=0.5\columnwidth]{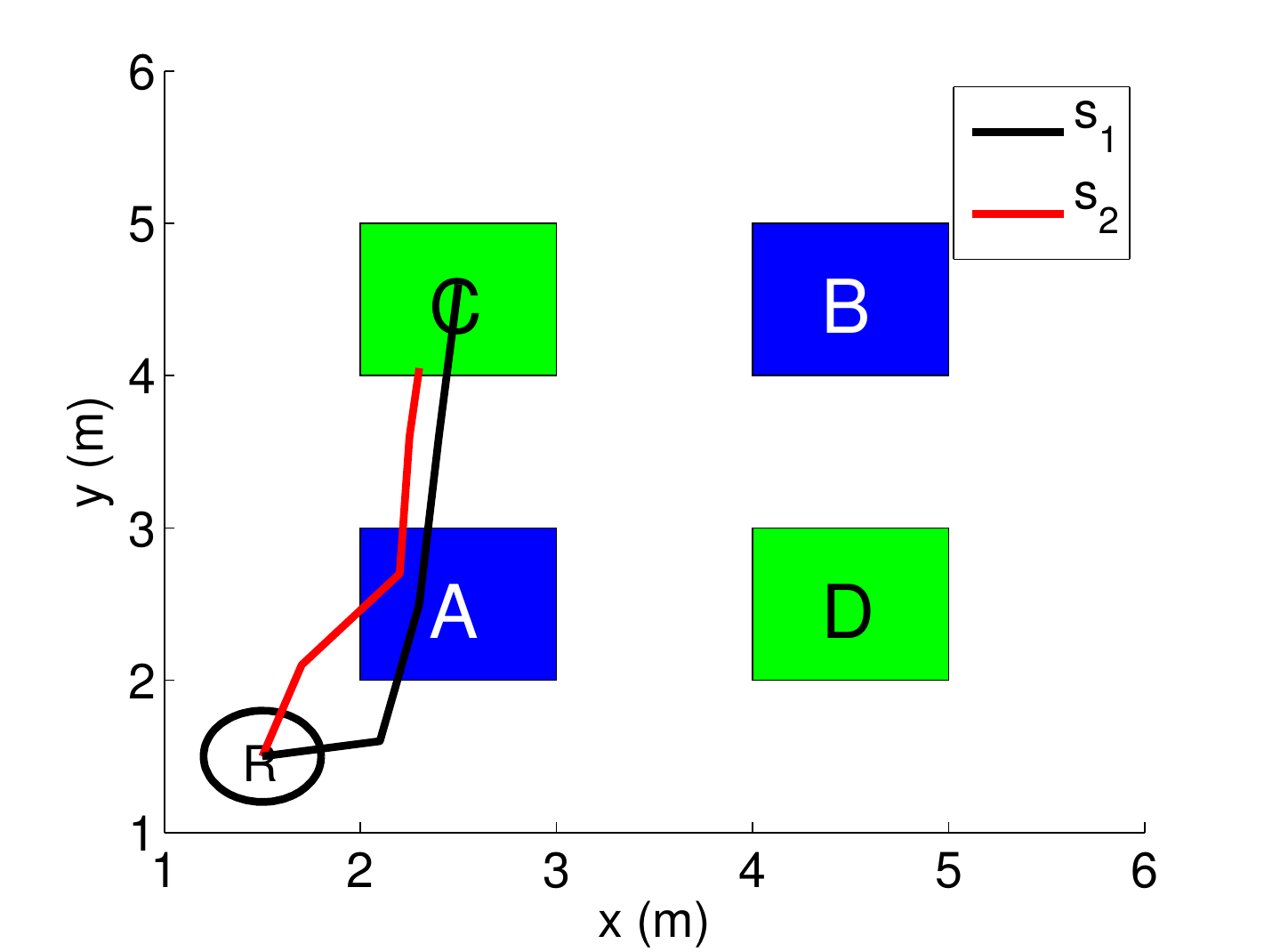} &
  \includegraphics[width=0.5\columnwidth]{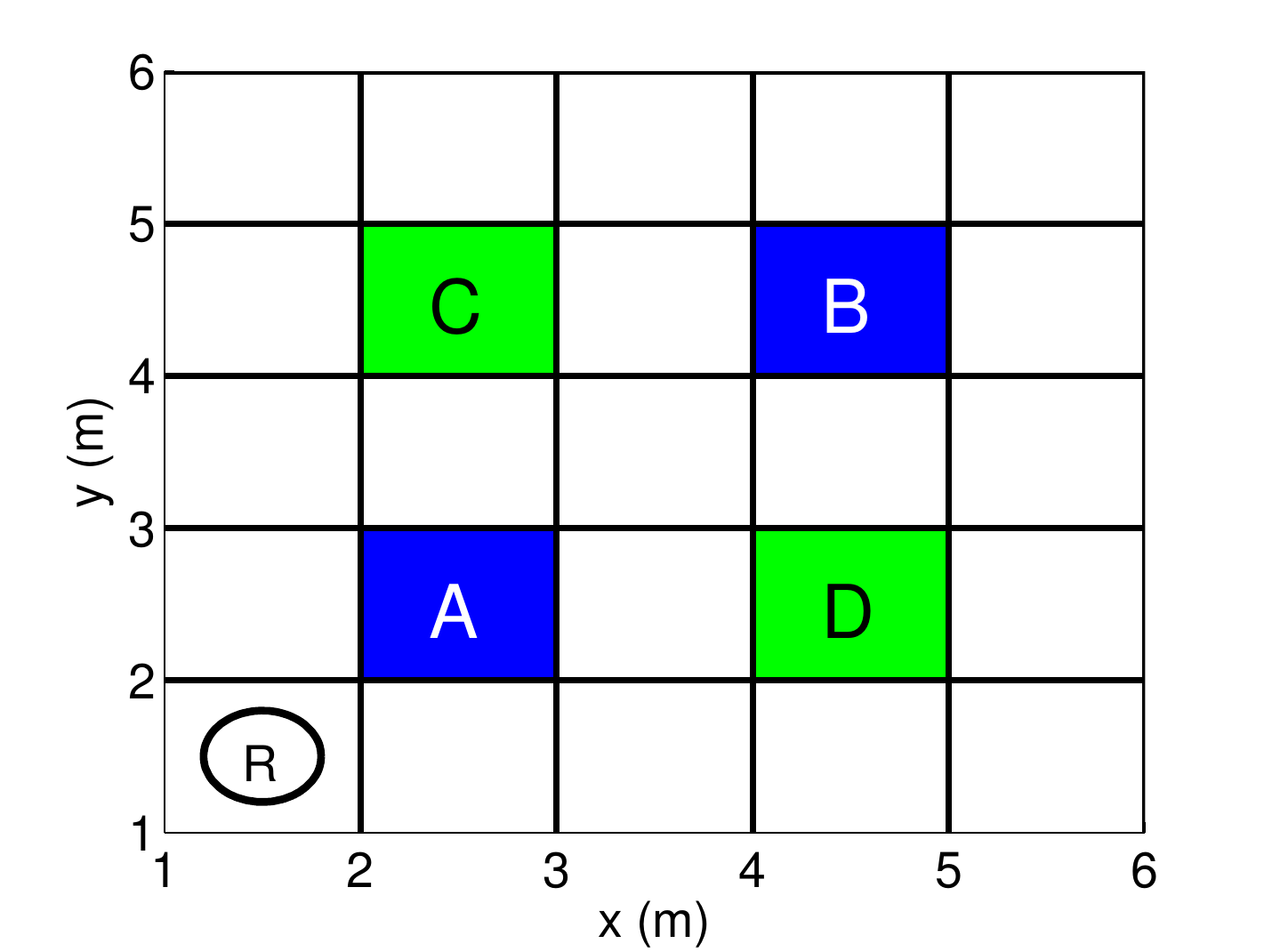} &
  \includegraphics[width=0.4\columnwidth]{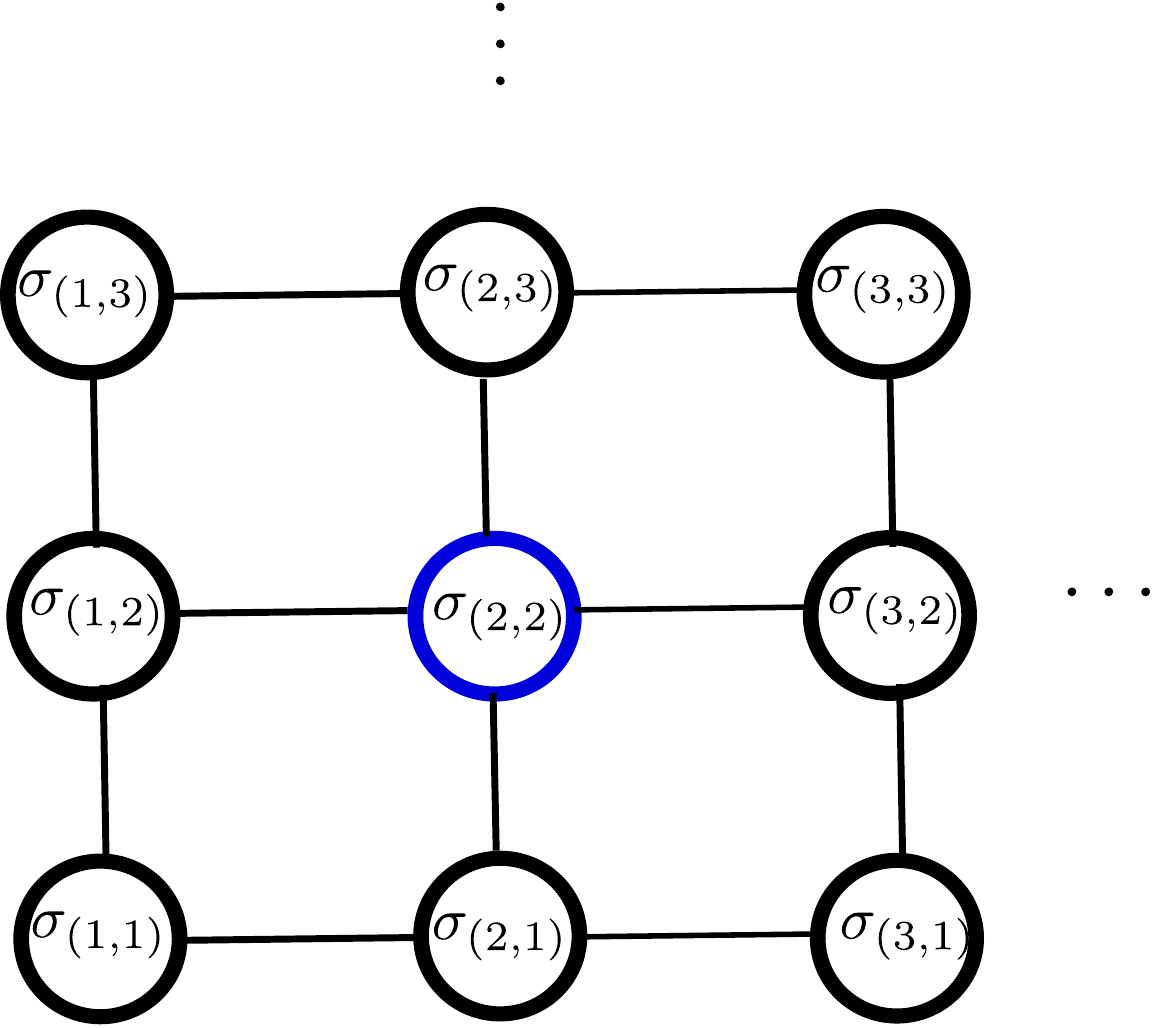} \\ 
  (a) & (b) & (c)
 \end{tabular}
\caption{\label{MDPEx}  \small (a) Example of robot navigation problem. (b) Partitioned space. (c) Subsection of the quotient.}
\end{figure*}

\begin{definition}
Given a quotient of a system $\mathcal{G}=(\Sigma,E)$ and a finite set of actions $Act$, a \emph{$\tau$-Markov Decision Process ($\tau$-MDP)} is a tuple $\mathcal{M}_{\tau}= 
\tuple{\mathcal{S},Act,\mathcal{P}}$, where 

\begin{itemize}
\item $\mathcal{S} \subseteq (\Sigma \cup \epsilon)^{\tau}$ is the set of finite states, where $\epsilon$ is the empty
string.
Each state $\sigma_{\tau} \in \mathcal{S}$ corresponds to a $\tau-$horizon (or shorter)
path in $\mathcal{G}$. Shorter paths of length $n < \tau$  (representing the case in which the system has not yet evolved for $\tau$ time steps) have
$\epsilon$ prepended $\tau-n$ times.
\item $\mathcal{P}: \mathcal{S} \times Act \times \mathcal{S} 
\rightarrow [0,1]$ is a probabilistic transition relation. $\mathcal{P}(\sigma_{\tau},a,\sigma_{\tau}')$ can be positive only 
if the first $\tau-1$ states of $\sigma_{\tau}'$ are equal to the last $\tau-1$ states of $\sigma_{\tau}$ and there exists an edge in $\mathcal{G}$
between the final state of $\sigma_{\tau}$ and the final state of $\sigma_{\tau}'$.
\end{itemize}
\label{def:tau-MDP}
\end{definition}

We denote the state of the $\tau$-MDP at time $t$ as $\sigma_{\tau}^t$. 

\begin{definition}
 Given a trajectory $s^{t-\tau+1:t}$ of the original system, we define 
its induced \em trace \em in the $\tau$-MDP $\mathcal{M}_{\tau}$ as $Tr(s^{t-\tau+1:t}) = \sigma^{t-\tau+1:t} = \sigma_{\tau}^t$. That is, $\sigma_{\tau}^t$ corresponds
to the previous $\tau$ regions of the statespace that the state has resided in from time $t-\tau+1$ to time $t$.
\end{definition}

 The construction of a $\tau$-MDP
from a given quotient and set of actions is straightforward.
The details are omitted due to length constraints.  We make the following key assumptions on the quotient and the resulting $\tau$-MDP:
\begin{itemize}
 \item The defined control
actions $Act$ will drive the system either to a point in the current region or to a point
 in a neighboring region of the partition, e.g.,no regions are ``skipped''.
 \item The transition relation $\mathcal{P}$ is Markovian.
\end{itemize}

% Also note that because we use a $Q$ learning algorithm in our approach, we never explicitly represent or learn $\mathcal{P}$. 

% We use $\pi_{\tau}: \Sigma^* \to (\Sigma^{\tau})^*$ denote the
% mapping of a sample path from $M$ to $\mathcal{M}_{\tau}$.  That is,
% 
% \begin{equation}
%  \pi_{\tau}(\sigma^{0:T}) = \big( (\epsilon^{(\tau-1)},\sigma^0)(\epsilon^{(\tau-2)},\sigma^{0:1}), \ldots, \sigma^{T-\tau:T} \big)
% \end{equation}
% Note that $\pi_{\tau}$ is invertible.

For every $\tau$ state $\sigma_{\tau}^t$, there exists a continuous set of sample paths $\{s^{t-\tau+1:t}\}$ whose 
traces could be that state. The dynamics of the underlying system produces an unknown distribution
$p(s^{t-\tau+1:t}|Tr(s^{t-\tau+1:t}) = \sigma_{\tau}^t)$. Since the robustness degree is a function of sample paths of length
$\tau$ and  an STL formula $\psi$, we can define a distribution $ p(r(s^{t-\tau+1:t},\psi) | Tr(s^{t-\tau+1:t}) = \sigma_{\tau}^t)$.

\addtocounter{example}{-1}
\begin{example}[cont'd]
 Figure \ref{fig:tau_MDP} shows a portion of the $\tau$-MDP constructed from
 Figure \ref{MDPEx}.  The states in $\mathcal{M}_{4}$ are labeled
 with the corresponding sample paths of length 4 in $\mathcal{G}$. The green and blue $\sigma$'s in
 the states in $\mathcal{M}_4$ correspond to green and blue regions from Figure \ref{MDPEx}.
\end{example}

\section{Problem Formulation}
\label{problemForm}
%\subsection{Construction of MDP}

 In this paper, we address the following two problems.
\begin{problem}[Maximizing Probability of Satisfaction]
\label{maxProb}
Let $\mathcal{M}_{\tau}$ be a $\tau$-MDP as described in the previous section. Given an STL formula $\phi$ with syntax \eqref{fragGram}, find a policy $\mu^*_{mp}\in \mathcal{F}( \mathcal{S} \times \mathbb{N},Act)$ such that
 \begin{equation}
 \begin{array}{l}
 \label{maxProbOpt}
  \mu^*_{mp} = \underset{\mu \in \mathcal{F}(\mathcal{S} \times \mathbb{N},Act)}{\arg \max}Pr_{s^{0:T}} [s^{0:T} \models \phi] \\
  
  \end{array}
 \end{equation}

\end{problem}

\begin{problem}[Maximizing Average Robustness]
\label{maxRob}
 Let $\mathcal{M}_{\tau}$ be as defined in Problem \ref{maxProb}. Given an STL formula $\phi$ with syntax \eqref{fragGram}, find a policy $\mu^*_{mr}\in \mathcal{F}( \mathcal{S},Act)$ such that
 \begin{equation}
 \label{maxRobOpt}
 \begin{array}{l}
  \mu^*_{mr} = \underset{\mu \in \mathcal{F}(\mathcal{S}\times \mathbb{N},Act)}{\arg \max}E_{s^{0:T}} [r(s^{0:T}, \phi)] \\
  \end{array}
 \end{equation}

\end{problem}

\begin{figure}[h!]
\begin{center}
\resizebox*{\columnwidth}{!}{\includegraphics{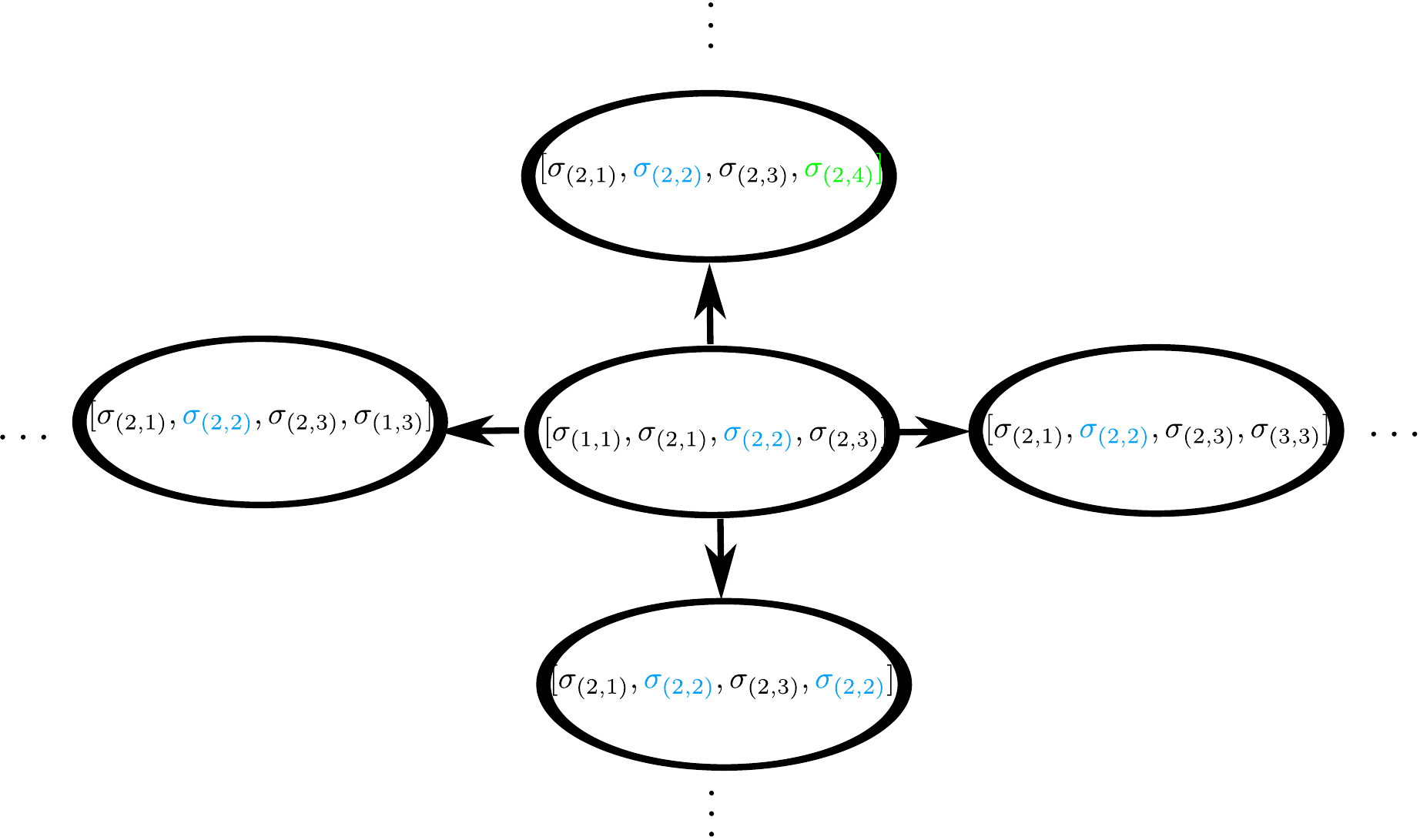}}%
\caption{\label{fig:tau_MDP} \small Part of the $\tau$-MDP constructed from the robot navigation MDP shown in Figure \ref{MDPEx}}
\end{center}
\end{figure}

% 
% The distribution $Pr_{s^{0:T}}$ is the unknown distribution over sample paths of length $T$ that is induced by the underlying dynamics of the system.  Since the robustness degree is a function of the sample path, it is also 
% a stochastic quantity with an unknown distribution.

Problems \ref{maxProb} and \ref{maxRob} are two alternate solutions to enforce a given STL specification.  The policy found by Problem~\ref{maxProb}, i.e. $\mu^*_{mp}$, maximizes the chance that $\phi$ will be satisfied, while the policy found by Problem~\ref{maxRob}, i.e. $\mu^*_{mr}$, drives the system to satisfy $\phi$ as strongly as possible on average. Problems similar to \eqref{maxProbOpt} have already been considered in the literature (e.g., \cite{Fu2014TLRL,Sadigh2014RLTL}). However, Problem \ref{maxRob} is a novel formulation that provides some advantages over Problem \ref{maxProb}. 
As we show in Section \ref{equivRelation}, for some special systems, $\mu^*_{mr}$ achieves the same probability of satisfaction as $\mu^*_{mp}$.  Furthermore, if $\phi$ is not satisfiable, any arbitrary policy could be a solution to Problem \ref{maxProb}, as all policies will result in a satisfaction probability of 0.  If $\phi$ is unsatisfiable, Problem \ref{maxRob} yields a solution that attempts to get as close as possible to satisfying the formula, as the optimal solution will have an average robustness value that is least negative.

The forms of the objective functions differ for the  two different types of formula,
$\phi = F_{[0,T)} \psi$ and $\phi=G_{[0,T)} \psi$. %when the specification $\phi$ has syntax in the fragment \eqref{fragGram}.  

\noindent\emph{Case 1:} \label{case1}Consider an STL formula $\phi = F_{[0,T)} \psi$.  In this case, the 
objective function in \eqref{maxProbOpt} can be rewritten as
\begin{equation}
\label{eq:Pr_case1}
Pr_{s^{0:T}} [ \exists t = \tau,\ldots,T-\tau \text{ s.t. } s^{t-{\tau+1}:t} \models \psi], 
\end{equation}
and the objective function in \eqref{maxRobOpt} can be rewritten as
\begin{equation}
\label{eq:Rob_case1}
 E_{s^{0:T}} [\underset{t=\tau,\ldots,T-\tau}{\max} r(s^{t-\tau+1:t}, \psi)].
\end{equation}

% Problem~\ref{maxProb} becomes a ``quasi-reachability'' problem, in which the goal is to reach a $\tau$-state $\sigma_{\tau}^t$ such that the associated
% trajectories satisfy $\phi$ with high probability.
% Similarly, Problem \ref{maxRob} becomes a search for a trajectory $s^{t-\tau+1:t}$ that maximizes $r(s^{t-\tau+1:t},\phi)$. % This is also not a reachability problem because the distribution of trajectories associated with a single trace induces a distribution of possible robustness values we can associate with it.

\noindent\emph{Case 2:} Now, consider an STL formula $\phi=G_{[0,T)}\psi$. The 
objective function in \eqref{maxProbOpt} can be rewritten as
\begin{equation}
Pr_{s^{0:T}} [ \forall t = \tau,\ldots,T-\tau \text{, } s^{t-{\tau}:t} \models \psi], 
\end{equation}
and the objective function in \eqref{maxRobOpt} can be rewritten as
\begin{equation}
 E_{s^{0:T}} [\underset{t=\tau,\ldots,T-\tau}{\min} r(s^{t-\tau+1:t}, \psi)].
\end{equation}

\section{Maximizing Expected Robustness vs. Maximizing Probability of Satisfaction}
\label{equivRelation}

Here, we demonstrate that the solution to \eqref{maxRobOpt} subsumes the solution to \eqref{maxProbOpt} for a certain class of systems.     Due to space 
limitations, we only consider formulae of the type $\phi = F_{[0,t)}\psi$.
Let $\Mt=(\St,\mathcal{P}_{\tau},Act)$ be a $\tau$-MDP.    For simplicity, we make the following
assumption on $\St$.
\begin{assump}
\label{stateSpaceAssumption}
For every state $\st \in \St$,  either every trajectory $s^{t+\tau-1:t}$
whose trace is $\st$ satisfies $\pt$,  
denoted $\st \models \pt$, or  every trajectory that passes through the sequence of regions associated with $\st$ does not satisfy  $\pt$, 
denoted $\st \not \models \pt$.
\end{assump}
Assumption \ref{stateSpaceAssumption} can be enforced in practice during partitioning.  
 We define the set 
 \begin{equation}
A = \{\st \in \St| \st \models \pt\}.
\end{equation}

\begin{definition}\label{def:distance}
The  signed graph distance of a $\tau-$state $\sigma_{\tau}^i \in \mathcal{S}$ to a set $X \subseteq \mathcal{S}$ is
\begin{equation}
d(\sigma_{\tau}^i,X)= \left \{ \begin{array}{ll} \min\limits_{\sigma_{\tau}^j \in X} l(\sigma_{\tau}^i,\sigma_{\tau}^j)  & \st^i \not \in X \\
                              - \min\limits_{\sigma_{\tau}^j \in  \St \setminus X} l(\st^j,\st^i)  & \st^i \in X \\
                               \end{array} \right .
\end{equation}
where $l(\sigma_{\tau}^i,\sigma_{\tau}^j)$ is the length of the shortest path from $\sigma_{\tau}^i$ to $\sigma_{\tau}^j$.
\end{definition}

We also make the following two assumptions.
\begin{assump}
\label{assump:boundedRob}
 For any signal $s^{t-\tau+1:t}$ such that $Tr(s^{t-\tau+1:t}) \in \St$, let 
 $r(s^{t-\tau+1:t},\pt)$ be bounded from below by $R_{min}$ and from above by 
 $R_{max}$.
 \end{assump}

% We assume that the states of $\St$ have the following relationship.
\begin{assump}
 Let $D_{\st}(\delta) = Pr[r(s^{t:t+\tau},\pt) > \delta | Tr(\sigt) = \st]$.  For any two states, 
\begin{equation}
d(\sigma^i_{\tau},A) < d(\st^j,A) \Rightarrow D_{\st^i}(\delta) \geq D_{\st^j}(\delta) \text{  } \forall \delta \in [R_{min},R_{max}]
\end{equation}
\label{assump:dr}
\end{assump}

  Now we define the policies $\mu^*_{mp}$ and $\mu^*_{mr}$ over $\Mt$ as 
 \begin{equation}
 \begin{array}{c}
 \label{maxProbOpt2}
  \mu^*_{mp} = \underset{\mu \in \mathcal{F}(\mathcal{S} \times \mathbb{N},Act)}{\arg \max}Pr_{\sigma_{\tau}^{0:T}} \Big[ \exists t \in [0,T] \text{ s.t. } \sigma_{\tau}^{t} \models \psi\Big]  \\ 
%   s.t. \quad \sigma_{\tau}^{t+1} = \sigma_{\tau} \text{ w. p. } \mathcal{P}(\sigma_{\tau}^t,\mu(\sigma_{\tau}^t,T-t),\sigma_{\tau}),
  \end{array}
 \end{equation}
  \begin{equation}
 \label{maxRobOpt2}
 \begin{array}{c}
  \mu^*_{mr} = \underset{\mu \in \mathcal{F}(\mathcal{S}\times \mathbb{N},Act)}{\arg \max} E_{\sigma_{\tau}^{0:T}} \Big[\underset{t=0,\ldots,T}{\max} r(\sigma_{\tau}^{t}, \psi)\Big]  \\ 
%   s.t. \quad \sigma_{\tau}^{t+1} = \sigma_{\tau} \text{ w. p. } \mathcal{P}(\sigma_{\tau}^t,\mu(\sigma_{\tau}^t,T-t),\sigma_{\tau}).
\end{array}
 \end{equation}

\begin{proposition}
If Assumptions~\ref{stateSpaceAssumption},\ref{assump:boundedRob}, and \ref{assump:dr} hold, then the policy $\mu^*_{mr}$ maximizes the expected probability of satisfaction.
\label{lemma:maxRob_minDistA}
\end{proposition}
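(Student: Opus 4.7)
The plan is to reduce both optimization problems to a single monotone criterion, distance to $A$, by exploiting the uniform-in-$\delta$ stochastic dominance in Assumption~\ref{assump:dr}, and then to conclude via the tail-integral representation of expectation.

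First, I would rewrite the objective in \eqref{maxRobOpt2} using the tail-integral formula for the expectation of a bounded random variable. Since $\max_{t} r(\st^t, \pt) \in [R_{\min}, R_{\max}]$ by Assumption~\ref{assump:boundedRob},
\begin{equation*}
E_{\st^{0:T}}\Big[\max_{t} r(\st^t, \pt)\Big] = R_{\min} + \int_{R_{\min}}^{R_{\max}} Pr_{\mu}\Big[\max_{t} r(\st^t,\pt) > \delta\Big]\, d\delta.
\end{equation*}
By Assumption~\ref{stateSpaceAssumption}, $\st^t \in A$ is equivalent to $r(s^{t-\tau+1:t},\pt) > 0$ (up to the boundary convention). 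Hence evaluating the integrand at $\delta=0$ yields $Pr_{\mu}[\max_{t} r(\st^t,\pt) > 0] = Pr_{\mu}[\exists\,t,\, \st^t \in A]$, which is exactly the objective of \eqref{maxProbOpt2}. So it suffices to prove that $\mu^*_{mr}$ maximizes the tail $Pr_{\mu}[\max_t r > \delta]$ simultaneously for every threshold $\delta \in [R_{\min},R_{\max}]$; taking $\delta=0$ will then be the claim.

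Second, I would establish the uniform-in-$\delta$ tail maximality by a backward induction on the horizon. The base case at time $T$ is exactly Assumption~\ref{assump:dr}: at a single state, closer to $A$ dominates in tail for every $\delta$. For the inductive step, the Bellman recursion for $Pr[\max_{t' \ge t} r_{t'} > \delta \mid \st^t]$ decomposes as the probability of already exceeding $\delta$ at the current step plus, on the complement, the expected future tail under the chosen action. Under the induction hypothesis the future term is monotonically non-increasing in the distance-to-$A$ of the successor state, \emph{uniformly} in $\delta$; combined with Assumption~\ref{assump:dr} on the current step, this shows that at every state the same action (the one whose action-induced successor distribution most weighs states close to $A$) maximizes the tail for every $\delta$ simultaneously. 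This common optimizer is therefore also the optimal action for $E[\max_t r_t]$, i.e., it agrees with $\mu^*_{mr}$.

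Third, since the greedy-toward-$A$ policy identified above simultaneously attains the pointwise maximum of $Pr_{\mu}[\max_t r > \delta]$ for every $\delta$, it attains the maximum of the integral above, and hence coincides with $\mu^*_{mr}$ (up to tie-breaking). Evaluating the pointwise maximum at $\delta = 0$ gives $Pr_{\mu^*_{mr}}[\exists\,t,\, \st^t \in A]$, which is the value of \eqref{maxProbOpt2} attained by $\mu^*_{mp}$. Therefore $\mu^*_{mr}$ also maximizes the satisfaction probability.

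The main obstacle is the inductive step: propagating the single-step, uniform-in-$\delta$ stochastic dominance of Assumption~\ref{assump:dr} into a uniform-in-$\delta$ dominance of the \emph{running maximum} over a finite horizon, while checking that the argmax over actions does not flip as $\delta$ varies. This turns on the fact that Assumption~\ref{assump:dr} is posited uniformly over $\delta \in [R_{\min},R_{\max}]$, so that the monotonicity in distance-to-$A$ can be decoupled from the particular threshold, and on carefully handling the ``already-exceeded'' term in the Bellman recursion so that it too is monotone in distance-to-$A$ under the induction hypothesis.
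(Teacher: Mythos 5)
Your first step matches the paper's: both arguments open with the tail-integral representation $E[\max_t r] = R_{min} + \int_{R_{min}}^{R_{max}} Pr_\mu[\max_t r > \delta]\,d\delta$ (the paper writes it split at $0$) and use Assumption~\ref{stateSpaceAssumption} to identify the integrand at $\delta = 0$ with the satisfaction probability. But your main engine --- a backward induction showing that a single policy maximizes $Pr_\mu[\max_t r > \delta]$ \emph{simultaneously for every} $\delta$ --- has a gap that the stated assumptions do not close. Assumption~\ref{assump:dr} orders \emph{states} by distance to $A$: a state closer to $A$ has a stochastically larger robustness, uniformly in $\delta$. It says nothing about how the \emph{actions} available at a given state order the successor distributions. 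Your inductive step needs, at each state, one action whose successor distribution dominates every other action's uniformly in $\delta$; if two actions induce incomparable distributions over distance-to-$A$ (one concentrates mass very near $A$, the other spreads mass at moderate distance), the maximizing action can flip as $\delta$ varies, no simultaneously tail-optimal policy need exist, and the evaluation at $\delta = 0$ collapses. You flag this as ``the main obstacle,'' but the resolution you offer --- that Assumption~\ref{assump:dr} is posited uniformly over $\delta$ --- concerns the ordering of states, not of action-induced successor kernels, so the obstacle is not actually removed. There is a second, smaller gap: even granting a simultaneously tail-optimal policy $\mu^\dagger$, you only obtain $ER(\mu^\dagger) = ER(\mu^*_{mr})$; an arbitrary maximizer $\mu^*_{mr}$ of the expected robustness could trade tail mass at $\delta = 0$ for tail mass at other thresholds and still attain the same integral, so ``coincides up to tie-breaking'' does not follow without further argument.

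For comparison, the paper takes a different route after the tail integral. It introduces the distribution $Pr_\mu(\st)$ of the state closest to $A$ visited along a trajectory, expresses both $EPS(\mu) = \sum_{\st \in A} Pr_\mu(\st)$ and $ER(\mu)$ in terms of that distribution and the per-state tails $D_{\st}$, reparametrizes the objective by the satisfaction probability $p = \sum_{\st \in A} Pr_\mu(\st)$, and argues that $\partial J/\partial p > 0$, so that raising the expected robustness forces $p$ up. This sidesteps any claim that one action is tail-optimal uniformly in $\delta$. If you want to rescue your induction, you would need an additional hypothesis ordering the action-induced successor distributions (e.g., total ordering by stochastic dominance with respect to $d(\cdot,A)$), which the paper does not assume.
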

\begin{proof}
Given any policy $\mu$, its associated reachability probability can be defined as 
\begin{equation}
 Pr_{\mu}(\st) = Pr_{\mu}\left[\st  = \underset{\st^0,\ldots,\st^{T-\tau}}{\arg \min} d(\st,A)\right].
\end{equation}
Let $I(.)$ be the indicator function such that $I(B)$ is 1 if $B$ is true and $0$ if $B$ is false. By definition, the expected probability of satisfaction for a given policy $\mu$ is
\begin{equation}
\begin{array}{ll}
 EPS(\mu) & = E\big[ I(\exists 0<k<T-\tau \text{ s.t. } \st^k \models \pt) \big] \\
  & =  \sum\limits_{\st \in \St} Pr_{\mu}(\st)I(\st \in A)  \\
  & = \sum\limits_{\st \in A} Pr_{\mu}(\st).
  \end{array}
 \end{equation}
Also, the expected robustness of policy $\mu$ becomes
\begin{equation}
\label{eq:ER_mu}
\begin{array}{lcl}
 ER(\mu) & = &  E\big[\underset{k=0,\ldots,T-\tau}{\max} r(\st^k,\pt) \big] \\
 & = &  \int_{0}^{R_{max}} Pr\big[ \underset{k=0,\ldots,T-\tau}{\max} r(\st^k,\pt) > x \big]dx  + \\
  & & \int_{R_{min}}^0 1 - Pr\big[ \underset{k=0,\ldots T-\tau}{\max} r(\st^k,\pt) > x \big]dx \\
 & = & \int_{0}^{R_{max}} Pr\big[ \underset{k=0,\ldots,T-\tau}{\max} r(\st^k,\pt) > x \big]dx - \\
 & & \int_{R_{min}}^{0} Pr\big[ \underset{k=0,\ldots,T-\tau}{\max} r(\st^k,\pt) > x \big]dx - R_{min} \\
  & = & \int_{0}^{R_{max}} \sum\limits_{\st \in \St} Pr_{\mu}(\st) D_{\st}(x) dx - \\
 &  & \int_{R_{min}}^{0} \sum\limits_{\st \in \St} Pr_{\mu}(\st) D_{\st}(x) dx - R_{min} \\
 & = & \sum\limits_{\st \in A} Pr_{\mu}(\st) \int_{0}^{R_{max}} D_{\st}(x)dx - \\
   & & \sum\limits_{\st \not \in A} Pr_{\mu}(\st) \int_{R_{min}}^0 D_{\st}(x)dx - R_{min}.
 \end{array}
\end{equation}

Since $R_{min}$ is constant, maximizing \eqref{eq:ER_mu} is equivalent to 
\begin{equation}
\label{eq:ER_mu_2}
 \begin{array}{c}
\underset{\mu}{\max} \Big(  \sum\limits_{\st \in A} Pr_{\mu}(\st) \int_{0}^{R_{max}} D_{\st}(x)dx \\
  -\sum\limits_{\st \not \in A} Pr_{\mu}(\st) \int_{R_{min}}^0 D_{\st}(x)dx \Big) %\\
  %\text{ subject to } \\ 
  %\sum\limits_{\st \in A} Pr_{\mu}(\st) = p
 \end{array}
\end{equation}

Let $p$ be the satisfaction probability such that $p = \sum\limits_{\st \in A} Pr_{\mu}(\st)$. Then, we can rewrite the objective in \eqref{eq:ER_mu_2} as

\begin{equation}
\begin{array}{lcl}
 J(\mu) & = &  p\sum\limits_{\st \in A} Pr_{\mu} \big[\st = \underset{\st^0,\ldots,\st^{T-\tau}}{\arg \min} d(\st,A)| \st \in A \big] \\  & & \times \int_{0}^{R_{max}}D_{\st}(x)dx\\
 &  & - (1-p) Pr_{\mu} \big[ \st = \underset{\st^0,\ldots,\st^{T-\tau}}{\arg \min} d(\st,A)| \st \not \in A \big] \\  & & \times\int_{R_{min}}^0D_{\st}(x)dx . 
\end{array}
\end{equation}

Now,
\begin{equation}
\label{eq:Jp}
\begin{array}{lcl}
 \frac{\partial J(\mu)}{\partial p}   & = &  \sum\limits_{\st \in A} Pr_{\mu} \big[ \st = \underset{\st^0,\ldots,\st^{T-\tau}}{\arg \min} d(\st,A)| \st \in A \big] \\  & & \times \int_{0}^{R_{max}}D_{\st}(x)dx\\
 &  &  + Pr_{\mu} \big[ \st = \underset{\st^0,\ldots,\st^{T-\tau}}{\arg \min} d(\st,A)| \st \not \in A \big] \\  & & \times\int_{R_{min}}^0D_{\st}(x)dx \\
 & > & 0 
\end{array}
 \end{equation} 
 Thus, any policy $\mu$ increasing $J(\mu)$ also leads to an increase in $p$. Since increasing $J(\mu)$ is equivalent to increasing $ER(\mu)$, then we can conclude that the policy that maximizes the robustness also achieves the  maximum satisfaction probability.
%if the allowed acceptance probability $p$ increases, $J(\mu)$ also increases based on \eqref{eq:Jp}. Since increasing $J(\mu)$ implies increasing the expected robustness $ER(\mu)$, is non-decreasing. Therefore, the policy that maximizes the robustness also achieves the  maximum satisfaction probability. 
\end{proof}

\section{Control Synthesis to Maximize Robustness}
\label{controlSyn}
%In literature, some common methods used to solve problems like (\ref{eq:mdpcost}) are value iteration, policy iteration, or linear programming. However, such methods typically require perfect knowledge about the transition probabilities and the reward function.   

\subsection{Policy Generation through $Q$-Learning}
% \textcolor{blue}{[DA:Some explanations about how we use reinforcement learning and solve Problem 1. Overall, it is based on the following theorem.]}

Since we do not know the dynamics of the system under control, we cannot \em a priori \em predict how a given control action will affect the evolution of the 
system and hence its progress towards satisfying/dissatisfying a given specification. Thus, we use the well-known paradigm of \em reinforcement learning \em
to learn policies to solve Problems \ref{maxProb} and \ref{maxRob}.  In reinforcement learning, the system takes actions and
records the rewards associated with the state-action pair.  These rewards are then used to update a feedback policy that maximizes the 
expected gathered reward.  In our cases, the rewards that we collect over $\mathcal{M}_{\tau}$ are related to whether or not $\psi$
is satisfied (Problem \ref{maxProb}) or how robustly $\psi$ is satisfied/violated (Problem \ref{maxRob}). 

% \subsection{$Q$-learning}
Our solutions to these problems rely on a $Q$-learning formulation \cite{tsitsiklis1994}.  Let $R(\st^t,a)$ be the reward collected when
action $a \in Act$ was taken in state $\st^t\in \mathcal{S}$.  Define the function $Q: \mathcal{S} \times Act \times \mathbb{N}$
as 
\begin{equation}
\begin{array}{lcl}
 Q(\st^{T-t},a,t) & = &  R(\st^{T-t},a) +  \\ & &  \underset{\{\mu_l \in \}_{l=T-t-1}^T}{\max}  E \big[ \sum_{l=T-t-1}^{T} R(\st^{l},\mu_l(\st^{l})) \big]\\
  & = & R(\st^{T-t},a) + \underset{a' \in Act}{\max}Q(\st^{T-t+1},a',t-1).
 \end{array}
\end{equation}

For an optimization problem with a cumulative objective function of the form 
\begin{equation}
 \label{additiveObjective}
 \sum_{l=\tau:T} R(\st^{l},a^l),
\end{equation}
the optimal policy $\mu^* \in \mathcal{F}(\mathcal{S},Act)$ can be found by
\begin{equation}
 \mu^*(\st^t,T-t) = \underset{a \in Act}{\arg \max} Q(\st^t,a,T-t).
\end{equation}

Applying the update rule
\begin{equation}
\label{Qlearn}
\begin{array}{ll}
Q_{t+1}(\st^t,a^t,T-t) = & (1-\alpha_t) Q_t(\st^t,a^t,T-t) +  \\ & \alpha_t [R(\st^t,a^t)+  \gamma \max\limits_{a' \in A} Q_t (\st^{t+1},a')]
\end{array}
\end{equation}
 where $0 < \gamma < 1$ will cause $Q_{t}$ converges to $Q$ w.p. 1 as $t$ goes to infinity \cite{tsitsiklis1994}.
% \begin{theorem} 
% For a given finite $\tau$-MDP, let $Q^*(\st^t,a^t)$ be the optimal Q-function for every pair of $(\st^t,a^t)$. 
% Consider the Q-learning algorithm with  the update rule \label{Qlearn}  where $\alpha_t$ satisfies 
% 
% \begin{equation}
% \begin{array}{l}
% \sum\limits_t \alpha_t = \infty \quad,\quad \sum\limits_t \alpha_t^2  < \infty.
% \end{array}
% \end{equation}
% 
% \noindent Then $Q_{k+1}(\sigma^t,a^t) $ converges to $Q^*(\sigma^t,a^t)$ with probability 1.
% 
% \end{theorem}
% \vskip2ex

\subsection{ Batch $Q$-learning}
\label{batchQ}
We cannot reformulate Problems \ref{maxProb} and \ref{maxRob} into the form \eqref{additiveObjective} 
(see Section \ref{problemForm}). Thus, we 
propose an alternate $Q-$learning formulation, called {\it batch $Q$-learning }, to solve these problems.  Instead
of updating the $Q$-function after each action is taken, we wait until an entire episode $s^{[0:T)}$ is completed before 
updating the $Q$-function.  The batch $Q$-learning procedure is summarized in Algorithm \ref{batchQAlg}.

\begin{algorithm}
 \caption{\label{batchQAlg} The Batch $Q$ learning algorithm.}
 \begin{algorithmic}
 \FUNCTION{BatchQLearn}{$Sys$,probType,$N_{ep}$,$\phi$}
  \STATE $Q \leftarrow $ RandomInitialization \\
  \STATE $\mu \leftarrow$ InitializePolicy($Q$)
  \FOR{$n = 1$ to $N_{ep}$} 
  \STATE $s^{[0,T)} \leftarrow $Simulate$(Sys,\mu)$ \\
  \STATE $Q \leftarrow $ UpdateQFunction($Q$,$\mu$,$s^{0:T}$,$\phi$,probType) \\
  \STATE $ \mu \leftarrow $ UpdatePolicy($\mu,Q$) 
  \ENDFOR
  \RETURN{$Q$,$\mu$}
 \end{algorithmic}

\end{algorithm}

\begin{algorithm}
 \caption{\label{QUpdate} Function used to update $Q$ function used in Algorithm \ref{batchQAlg}.}
 \begin{algorithmic}
  \FUNCTION{UpdateQFunction}{$Q$,$\mu$,$s^{0:T}$,$\phi$,$\gamma$,probType}
  \FOR {$n = T-\tau-1$ to $\tau$}
  \IF{probType is MaximumProbability} 
  \STATE $Q_{tmp}(\st^n,\mu(\st^n,T-n)) \leftarrow $ \\
  \STATE $\max(I(s^{n-\tau+1:n} \models \phi)$, \\ 
  \STATE $\gamma Q_{tmp}(\st^{n+1},\mu(\st^{n+1},T-n-1))$
  \ELSE
  \STATE $Q_{tmp}(\st^n,\mu(\st^n,T-n)) \leftarrow $ \\
  \STATE $\max(r(s^{n-\tau+1:n}, \phi)$, \\ 
  \STATE $\gamma Q_{tmp}(\st^{n+1},\mu(\st^{n+1},T-n-1))$
  \ENDIF
  \STATE $Q_{new}(\st^n,\mu(\st^n,T-n) \leftarrow $
  \STATE $(1-\alpha)Q_{tmp}(\st^n,\mu(\st^n,T-n)$ \\
  \STATE $ + \alpha Q(\st^n,\mu(\st^n,T-n)$
  \ENDFOR
  \RETURN{$Q_{new}$}
 \end{algorithmic}
\end{algorithm}

The $Q$ function is initialized to random values and $\mu$ is computed from the initial $Q$ 
values. Then, for $N_{ep}$ episodes, the system is simulated using $\mu$. Randomization
is used to encourage exploration of the policy space.  The observed trajectory
is then used to update the $Q$ function according to Algorithm \ref{QUpdate}.  The new value of the $Q$ function is used
to update the policy $\mu$.   For compactness, Algorithm \ref{QUpdate} as written only
covers the case $\phi = F_{[0,T)} \psi$.  The case in which $\phi=G_{[0,T)} \psi$ can be addressed similarly.

\subsection{Convergence of Batch $Q$-learning}
Given a formula of the form $\phi= F_{[0,T)} \psi$ and an objective of maximizing the expected robustness (Problem \ref{maxRob}), we will show that applying Algorithm \ref{batchQAlg} converges to the optimal solution. The
other three cases discussed in Section~\ref{problemForm} can be proven similarly.   The following analysis is based on \cite{Melo2015}.
The optimal $Q$ function derived from \eqref{eq:Rob_case1} is 
\begin{equation}
\label{optQ}
\begin{array}{ll}
 Q^*(\st^k,a,T-k) = &  \sum_{\st^{t+1}}\mathcal{P}(\st^t,a,\sigma^{t+1}) \max(r(\st^t,\pt), \\
  & \underset{b \in Act}{\max}\gamma Q^*(\st^{t+1},b,T-t-1)).
  \end{array}
\end{equation}

This gives the following convergence result.

\begin{proposition}
\label{Qlearnconverge}
The $Q$-learning rule given by 
\begin{equation}
\label{qrule}
\begin{array}{ll}
 Q_{k+1}(\st^t,a^t,T-t) = &  (1-\alpha_k) Q_k(\st^t,a^t,T-t) \\
 & + \alpha_k \max(r(\st^t,\pt), \\ & \underset{b \in Act}{max}  \gamma Q_k(\st^{t+1},b,T-t-1)),
 \end{array}
\end{equation}
converges to the optimal $Q$ function \eqref{optQ} if the sequence $\{\alpha_k\}_{k=0}^{\infty}$ is such that 
 $\sum_{k=0}^{\infty} \alpha_k = \infty$ and  $\sum_{k=0}^{\infty}(\alpha_k)^{2} < \infty$.
\end{proposition}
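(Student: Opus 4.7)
The plan is to cast the batch $Q$-learning update \eqref{qrule} as a stochastic approximation scheme and invoke a standard contraction-based convergence theorem for asynchronous stochastic approximation (e.g., the one used in \cite{tsitsiklis1994,Melo2015}). The only nonstandard ingredient is that the Bellman-type operator here uses $\max(r,\gamma \max_b Q)$ in place of the usual additive form $r+\gamma \max_b Q$, so the main work is verifying that this modified operator is still a contraction in the sup norm.

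First I would define the operator $H$ acting on $Q$-functions by
\begin{equation*}
(HQ)(\st^t,a,T-t) = \sum_{\st^{t+1}}\mathcal{P}(\st^t,a,\st^{t+1})\,\max\!\Bigl(r(\st^t,\pt),\;\gamma\max_{b\in Act} Q(\st^{t+1},b,T-t-1)\Bigr),
\end{equation*}
so that \eqref{optQ} says $Q^*$ is a fixed point of $H$. The learning rule \eqref{qrule} is exactly the stochastic approximation iteration $Q_{k+1} = (1-\alpha_k)Q_k + \alpha_k(HQ_k + w_k)$, where $w_k$ is zero-mean noise stemming from the fact that a single sampled successor $\st^{t+1}$ is used in place of the expectation under $\mathcal{P}$.

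Second, I would show $H$ is a $\gamma$-contraction in the sup norm. The key elementary fact is that $|\max(x,y_1)-\max(x,y_2)| \le |y_1-y_2|$ for any real $x,y_1,y_2$; combined with $|\max_b Q_1(\cdot,b)-\max_b Q_2(\cdot,b)|\le \max_b |Q_1(\cdot,b)-Q_2(\cdot,b)|$, this yields
\begin{equation*}
\bigl|(HQ_1)(\st^t,a,T-t)-(HQ_2)(\st^t,a,T-t)\bigr| \le \gamma \sum_{\st^{t+1}}\mathcal{P}(\st^t,a,\st^{t+1}) \max_b |Q_1-Q_2|(\st^{t+1},b,T-t-1),
\end{equation*}
and hence $\|HQ_1-HQ_2\|_\infty \le \gamma\|Q_1-Q_2\|_\infty$. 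Since $\gamma<1$, $H$ has a unique fixed point, which must be $Q^*$.

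Third, I would verify the remaining hypotheses of the asynchronous stochastic approximation theorem: the step sizes satisfy $\sum_k \alpha_k = \infty$, $\sum_k \alpha_k^2 < \infty$ by assumption; the noise term $w_k$ is a martingale difference with respect to the natural filtration and has bounded conditional variance because $r(\cdot,\pt)$ is bounded (Assumption~\ref{assump:boundedRob}) and $\mathcal{S}$ and $Act$ are finite, so the $Q_k$ iterates remain bounded; and each state-action pair is visited infinitely often provided the exploration policy in Algorithm~\ref{batchQAlg} uses persistent randomization. Invoking the theorem yields $Q_k \to Q^*$ with probability one. The main obstacle is really the contraction step: once $|\max(x,y_1)-\max(x,y_2)|\le|y_1-y_2|$ is in hand the rest is routine, but this little inequality is what replaces the usual linearity argument and must be applied carefully because the bounded reward $r(\st^t,\pt)$ appears inside the $\max$ rather than as an additive constant.
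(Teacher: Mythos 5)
Your proposal is correct and follows essentially the same route as the paper: the paper's own proof is a sketch that reduces everything to Proposition~\ref{contProp} (that $H$ is a $\gamma$-contraction with fixed point $Q^*$) and then defers to the standard stochastic-approximation argument of \cite{Melo2015}, which is exactly the structure you lay out, with the added benefit that you state the step-size, noise, and infinite-visitation hypotheses explicitly. The only cosmetic difference is in the contraction step itself, where you use the one-line inequality $|\max(x,y_1)-\max(x,y_2)|\le|y_1-y_2|$ in place of the paper's three-case analysis of the relative ordering of $r(\st,\pt)$, $q_1^*$, and $q_2^*$; your version is cleaner but proves the same lemma.
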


\begin{proof}
 (Sketch) The proof of Proposition \ref{Qlearnconverge} relies primarily on Proposition \ref{contProp}.  Once this 
is established, the rest of the proof varies only slightly from the presentation in \cite{Melo2015}.
\end{proof}

Note that in this case, $k$ ranges over the number of episodes and $t$ ranges over the time 
coordinate of the signal.
\begin{proposition}
\label{contProp}
The optimal $Q$-function given by \eqref{optQ} is a fixed point of the contraction mapping $H$ where
\begin{equation}
\begin{array}{ll}
 (Hq)(\st^t,a,T-t) =  & \sum_{\st^{t+1}}\mathcal{P}(\st^t,a,\sigma^{t+1})\max(r(\st^t,\pt),  \\
 & \gamma \underset{b \in Act}{\max}q(\st^{t+1},b,T-t-1)).
 \end{array}
\end{equation}
\end{proposition}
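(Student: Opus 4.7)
The plan is to verify the two parts of the assertion separately. First, that $Q^*$ is indeed a fixed point of $H$; second, that $H$ is a contraction in the supremum norm with contraction factor $\gamma < 1$. Existence and uniqueness of the fixed point then follow from Banach's fixed point theorem, which also pins down $Q^*$ as the unique such fixed point and justifies its characterization as a limit of Picard iterations of $H$—a fact used in the convergence argument for Proposition~\ref{Qlearnconverge}.

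For the fixed-point claim, I would simply substitute the definition of $Q^*$ from \eqref{optQ} into $(HQ^*)(\st^t,a,T-t)$ and observe that both sides agree term by term. This is essentially by construction: \eqref{optQ} is the Bellman-style equation that $H$ encodes, so no real work is required here beyond writing the equality out.

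The main step is the contraction estimate. Given two candidate $Q$-functions $q_1, q_2$, I would bound $|(Hq_1)(\st^t,a,T-t) - (Hq_2)(\st^t,a,T-t)|$ by pulling the absolute value inside the sum over $\st^{t+1}$, which is valid since $\mathcal{P}(\st^t,a,\cdot)$ is a probability distribution. The integrand then has the form $|\max(r,\gamma\max_b q_1) - \max(r,\gamma\max_b q_2)|$, where $r = r(\st^t,\pt)$ is common to both. Two elementary inequalities, $|\max(r,x) - \max(r,y)| \leq |x-y|$ and $|\max_b u(b) - \max_b v(b)| \leq \max_b |u(b) - v(b)|$, together reduce the bound to
\[
\gamma \sum_{\st^{t+1}} \mathcal{P}(\st^t,a,\st^{t+1}) \max_{b \in Act} |q_1(\st^{t+1},b,T-t-1) - q_2(\st^{t+1},b,T-t-1)|,
\]
which is at most $\gamma \|q_1 - q_2\|_\infty$ because $\mathcal{P}(\st^t,a,\cdot)$ sums to one. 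Taking the supremum over $(\st^t, a, T-t)$ on the left gives $\|Hq_1 - Hq_2\|_\infty \leq \gamma \|q_1 - q_2\|_\infty$, establishing the contraction.

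I do not anticipate a significant obstacle: the argument is standard once the two $\max$-inequalities are stated. The only mildly delicate point is the shift in the time index from $T-t$ to $T-t-1$ on the right-hand side of $H$, which requires the supremum norm to be taken over all $(\st,a,T-t)$ jointly so that the comparison between $q_1$ and $q_2$ at shifted indices is well defined. This is a matter of choosing the function space correctly rather than a genuine mathematical difficulty.
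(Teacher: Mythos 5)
Your proposal is correct and follows essentially the same route as the paper: both verify the fixed-point identity by direct substitution and then establish $\|Hq_1-Hq_2\|_\infty \le \gamma\|q_1-q_2\|_\infty$ by bounding the pointwise difference inside the sum over $\st^{t+1}$ and using that $\mathcal{P}(\st^t,a,\cdot)$ sums to one. The only difference is presentational: where you invoke the standard non-expansiveness inequalities $|\max(r,x)-\max(r,y)|\le|x-y|$ and $|\max_b u(b)-\max_b v(b)|\le \max_b|u(b)-v(b)|$, the paper unrolls the first of these into an explicit three-way case analysis on the ordering of $r(\st,\pt)$, $q_1^*$, and $q_2^*$; your version is the cleaner statement of the same argument.
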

\begin{proof}
By \eqref{optQ}, if $H$ is a contraction mapping, then $Q^*$ is a fixed point of $H$.  Consider
 \begin{equation}
 \label{contMap}
  \begin{array}{ll}
   ||Hq_1-Hq_2||_{\infty} & = \max\limits_{\st,a} \sum\limits_{\st'} \mathcal{P}(\st,a,\st')(\max(r(\st,\pt), \\
    &  \gamma  \underset{b \in Act}{\max} q_1(\st',b,T-t-1)) \\ & -\max(r(\st,\pt),\underset{b \in Act} \gamma q_2(\st',b,T-t-1)).
  \end{array}
 \end{equation}

Define  
 \begin{equation}
   q_j^*(t) = \underset{b \in Act}{\max} \gamma q_1(\st',b,t).
  \end{equation}
  WOLOG let $ q_1^*(T-t-1) >  q_2^*(T-t-1)$. Define
  \begin{equation}
  \begin{array}{ll}
   R(\st') = &  (\max(r(\st,\pt),q_1^*(T-t-1) \\ 
    & -\max(r(\st,\pt),q_2^*(T-t-1))
   \end{array}
  \end{equation}
  
There exist 3 possibilities for the value of $R(\st)$. 
 \begin{subequations}
  \begin{equation}
    \begin{array}{l} r(\st,\pt) > q_1^*(T-t-1) > q_2^*(T-t-1) \\
    \Rightarrow R(\st')  = 0 \end{array}.
  \end{equation}
\begin{equation}
  \begin{array}{l}  q_1^*(T-t-1) > r(\st,\pt)  >  q_2^*(T-t-1) \\
    \Rightarrow R(\st') = || q_1^*(T-t-1) -r||_{\infty} <  \gamma ||q_1-q_2||_{\infty} \end{array}
\end{equation}
\begin{equation}
   \begin{array}{l}  q_1^*(T-t-1) >  q_2^*(T-t-1) > r(\st,\pt) \\
   \Rightarrow R(\st') <  \gamma ||q_1-q_2||_{\infty}.\end{array}
\end{equation}
 \end{subequations}

Thus, this means that $R(\st') \leq \gamma ||q_1-q_2||_{\infty}$  $\forall \st'$.  Hence,
\begin{equation} 
\begin{array}{ll} 
  ||Hq_1-Hq_2||_{\infty}  &  = \max_{\st,a} \sum_{\st'} \mathcal{P}(\st,a,\st')R(\st')  \\
   & \leq \max_{\st,a}\sum_{\st'}\mathcal{P}(\st,a,\st') \gamma ||q_1-q_2||_{\infty}  \\ 
    & \leq  \gamma ||q_1-q_2||_{\infty}.
    \end{array}
\end{equation}
Therefore, $H$ is a contraction mapping.
\end{proof}

\section{Case Study}
\label{caseStudy}

We implemented the batch-$Q$ learning algorithm (Algorithm
\ref{batchQAlg}) and applied it to two case studies that adapt the
robot navigation model from Example \ref{running}.  For each case
study, we solved Problems \ref{maxProb} and \ref{maxRob} and compared
the performance of the resulting policies.  All simulations were
implemented in Matlab and performed on a PC with a 2.6 GHz processor
and 7.8 GB RAM.

\subsection{Case Study 1: Reachability}
\label{reachability}
First, we consider a simple reachability problem.  The  given STL specification is
\begin{equation}
 \label{casestudy1spec}
 \phi_{cs1} = F_{[0,20)}(F_{[0,1)} \varphi_{blue} \wedge G_{[1,4)}  \neg \varphi_{blue}), 
\end{equation}
where $\varphi_{blue}$ is the STL subformula corresponding to being in a blue region.
In plain English, \eqref{casestudy1spec} can be stated
as ``Within 20 time units, reach a blue region and then don't revisit a blue region
for 4 time units.''  The results from applying
Algorithm \ref{batchQAlg} are summarized in Figure \ref{caseStudy1}.  We used the
parameters $\gamma = 1, \alpha_t = 0.95$, $N_{ep}$ = 300 and
$\epsilon^t = 0.995^t$, where $\epsilon^t$ is the probability at
iteration $t$ of selecting an action at random \footnote{Although the conditions $\gamma < 1$ and $\sum_{k=0}^{\infty}\alpha_k^2 < \infty$ are technically
required to prove convergence, in practice these conditions can be relaxed without having adverse effects 
on learning performance}.  Constructing the
$\tau$-MDP took 17.2s.  Algorithm \ref{batchQAlg} took 161s to solve
Problem \ref{maxProb} and 184s to solve Problem \ref{maxRob}.

% \begin{figure}
% \begin{tabular}{cc}
% \includegraphics[width=0.475\columnwidth]{casestudy2probabilitytraining.eps} & \includegraphics[width=0.475\columnwidth]{casestudy2robustnesstraining.eps} \\
%  (a) & (b) \\
%   \includegraphics[width=0.475\columnwidth]{casestudy2maxProbHistogram.eps} & \includegraphics[width=0.475\columnwidth]{casestudy2maxRobustnessHistogram.eps} \\
%  (c) & (d) \\
%   \includegraphics[width=0.475\columnwidth]{casestudy2probtrace.eps} & \includegraphics[width=0.475\columnwidth]{casestudy2robustnesstrace.eps} \\
%  (e) & (f) 
% \end{tabular}
% \caption{\label{caseStudy1}Comparison of Policies for Case Study 1.  Mean robustness averaged every 20  training iterations, (a)  Problem \ref{maxProb}  and  (b) Problem \ref{maxRob}.
% Histogram of robustness values for trained policies for (c) solution to Problem \ref{maxProb} and
%  (d) solution to Problem \ref{maxRob}. Example trace generated from trained policies for (e) solution to Problem \ref{maxProb} and
%  (f) solution to Problem \ref{maxRob}} 
% \end{figure}

\begin{figure}
\begin{tabular}{cc}
\includegraphics[width=0.43\columnwidth]{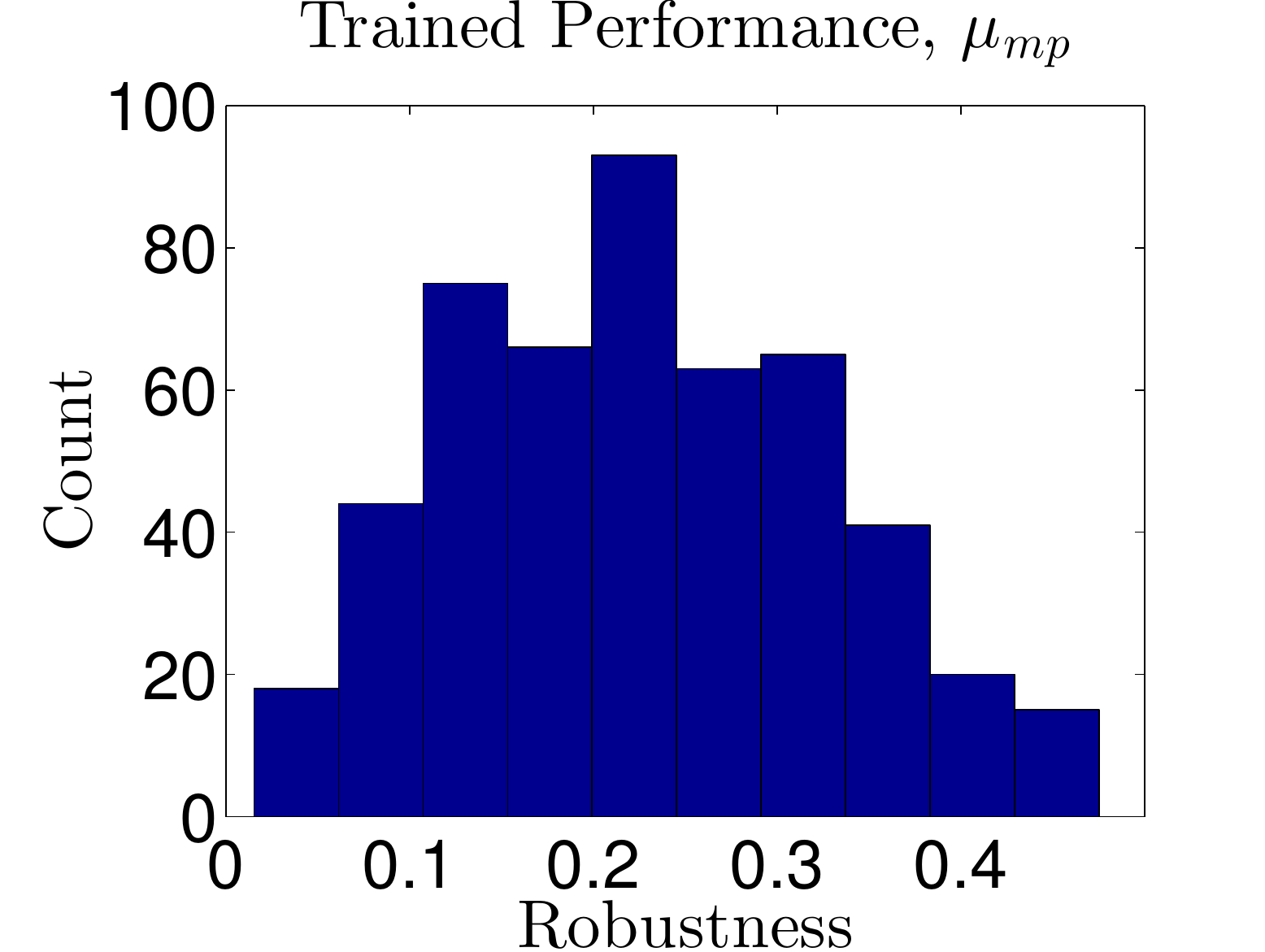} & \includegraphics[width=0.43\columnwidth]{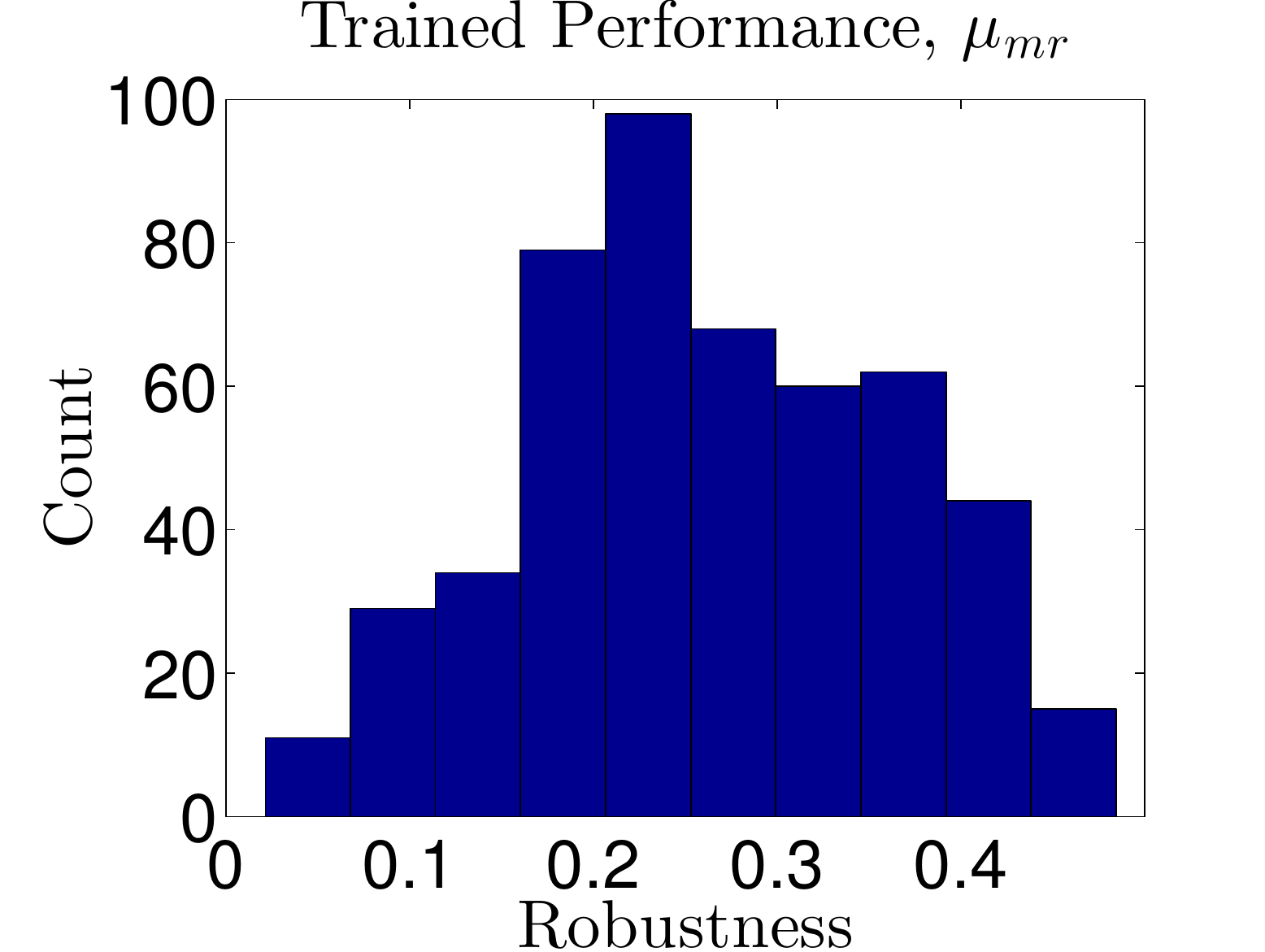} \\
 (a) & (b) \\
  \includegraphics[width=0.43\columnwidth]{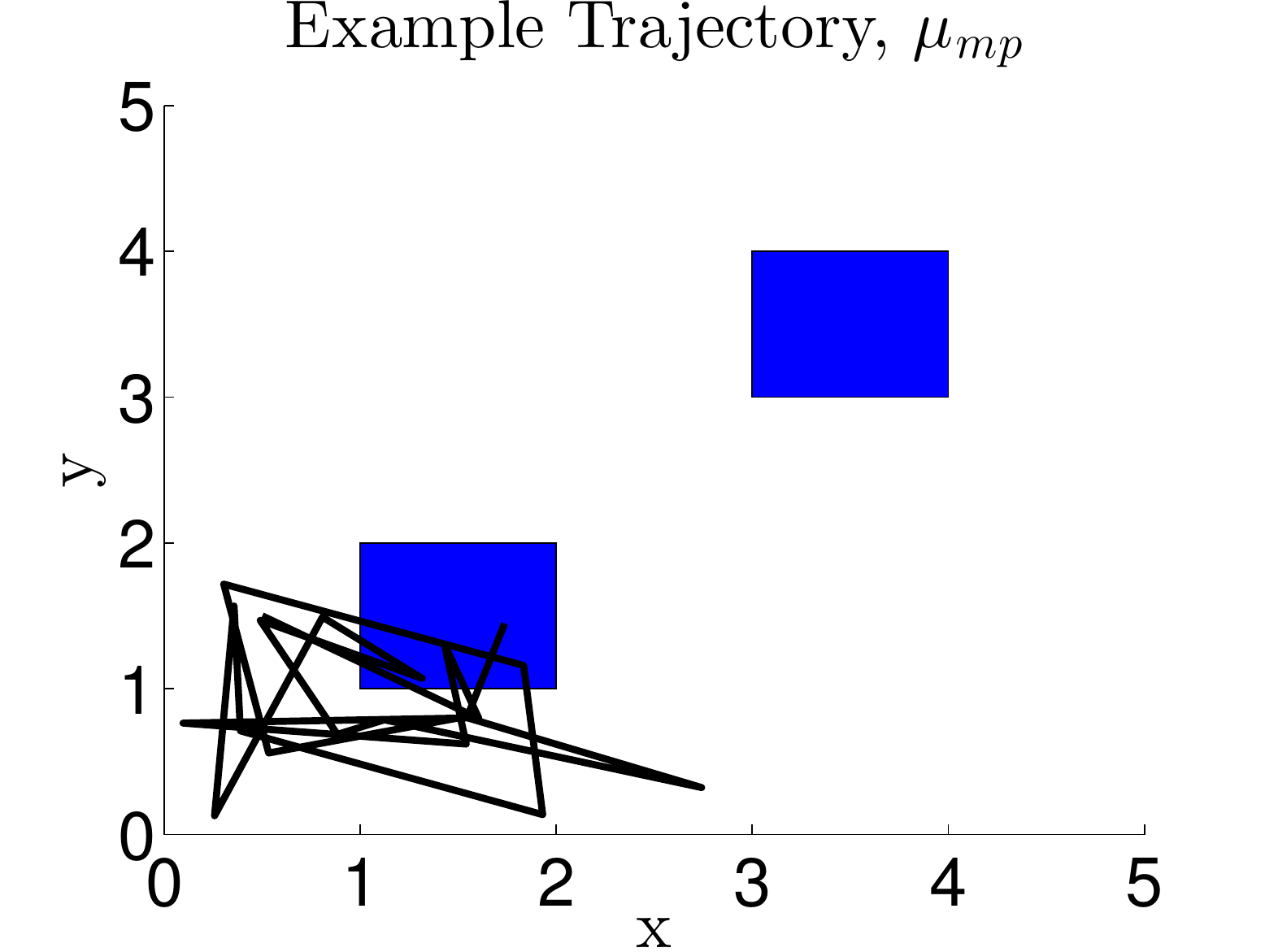} & \includegraphics[width=0.43\columnwidth]{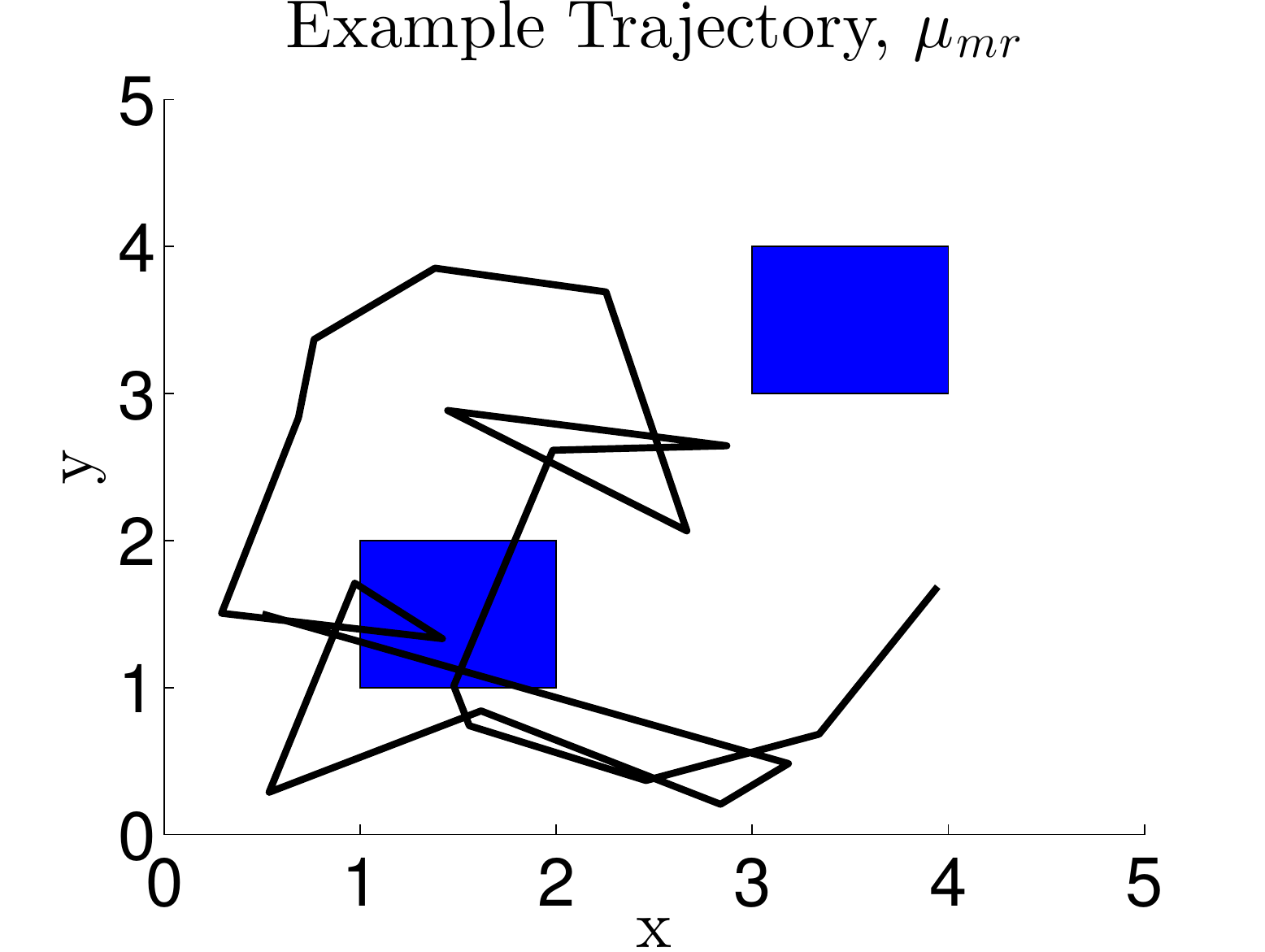} \\
 (c) & (d) \\
\end{tabular}
\caption{\label{caseStudy1} \small Comparison of Policies for Case
  Study 1.
  Histogram of robustness values for trained policies for solution
  to (a) Problem \ref{maxProb} and (b) Problem
  \ref{maxRob}. Trajectory generated from  policies for
  solution to (c) Problem \ref{maxProb} and (d) Problem
  \ref{maxRob}.}
\end{figure}

The two approaches perform very similarly.  In the first row, we show
a histogram of the robustness of 500 trials generated from the system
simulated using each of the trained policies after learning has
completed, i.e. without the randomization that is used during the
learning phase.  Note that both trained policies satisfied the
specification with probability 1.  The performance of the two algorithms 
are very similar, as the mean robustness is  0.2287 
with standard deviation 0.1020 for probability maximization and 0.2617 and 0.1004,resp., for robustness maximization. 
 In the second
row, we see trajectories simulated by each of the trained
policies.

% The results of this case study indicate that the robustness maximization algorithm does not perform worse than the established probability
% maximization paradigm. 
The similarity of the solutions in this case study is not surprising.
% The goal region $A$ is located near the starting
% point $\begin{bmatrix} 2.5 & 2.5 \end{bmatrix}$, and thus there are many  possible policies that will drive the system to $A$ with high probability at some point in time.   Further,since 
% $A$ is likely visited frequently, there are likely several instances in which a policy will drive the system to stay within $A$ for at least 4 time seconds.  This means that a satisfying trajectory
% will be observed frequently often, and both the $Q$-learning algorithm solving Problem \ref{maxProb} and the $Q$-learning algorithm solving Problem \ref{maxRob} will be able to reinforce the
% policies that generate them.
If the state of the system is deep within $A$ or $B$, then the
probability that it will remain inside that region in the next 3 time
steps (satisfy $\phi$) is higher than if it is at the edge of the
region. Trajectories that remain deeper in the interior of region $A$
or $B$ also have a high robustness value.  Thus, for this particular
problem, there is an inherent coupling between the policies that
satisfy the formula with high probability and those that satisfy the
formula as robustly as possible on average.
\subsection{Case Study 2: Repeated Satisfaction}

In this second case study, we look at a problem involving repeatedly
satisfying a condition finitely many times.  The specification of
interest is
\begin{equation}
\label{caseStudy2spec}
 \phi_{cs2} = G_{[0,12)} (F_{[0,4)} (\varphi_{blue} ) \wedge  F_{[0,4)} (\varphi_{green}) ),
\end{equation}
  In
plain English, \eqref{caseStudy2spec} is ``Ensure that
every 4 time units over a 12 unit interval, a green region and a blue region is entered.'' Results from this case study are shown in Figure
\ref{caseStudy2}. We
used the same parameters as listed in Section \ref{reachability}, except $N_ep$ = 1200,$\alpha=0.4$, and
$\epsilon^t  = 0.9^t$. Constructing the $\tau$-MDP took 16.5s.  Applying
Algorithm \ref{batchQAlg} took 257.7s for Problem \ref{maxProb} and 258.3s
for Problem \ref{maxRob}.
% \begin{tabular}{cc}
% \includegraphics[width=0.475\columnwidth]{casestudy1probabilitytraining.eps} & \includegraphics[width=0.475\columnwidth]{casestudy1robustnesstraining.eps} \\
%  (a) & (b) \\
%   \includegraphics[width=0.475\columnwidth]{casestudy1maxProbHistogram.eps} & \includegraphics[width=0.475\columnwidth]{casestudy1maxRobustnessHistogram.eps} \\
%  (c) & (d) \\
%  \includegraphics[width=0.475\columnwidth]{casestudy1probtrace.eps} & \includegraphics[width=0.475\columnwidth]{casestudy1robustnesstrace.eps} \\
%  (e) & (f) 
% \end{tabular}
% \caption{\label{caseStudy2}Comparison of Policies for Case Study 2. Subplots have same meaning as in Figure \ref{caseStudy1}.} 
% \end{figure}
\begin{figure}
\begin{tabular}{cc}
 \includegraphics[width=0.43\columnwidth]{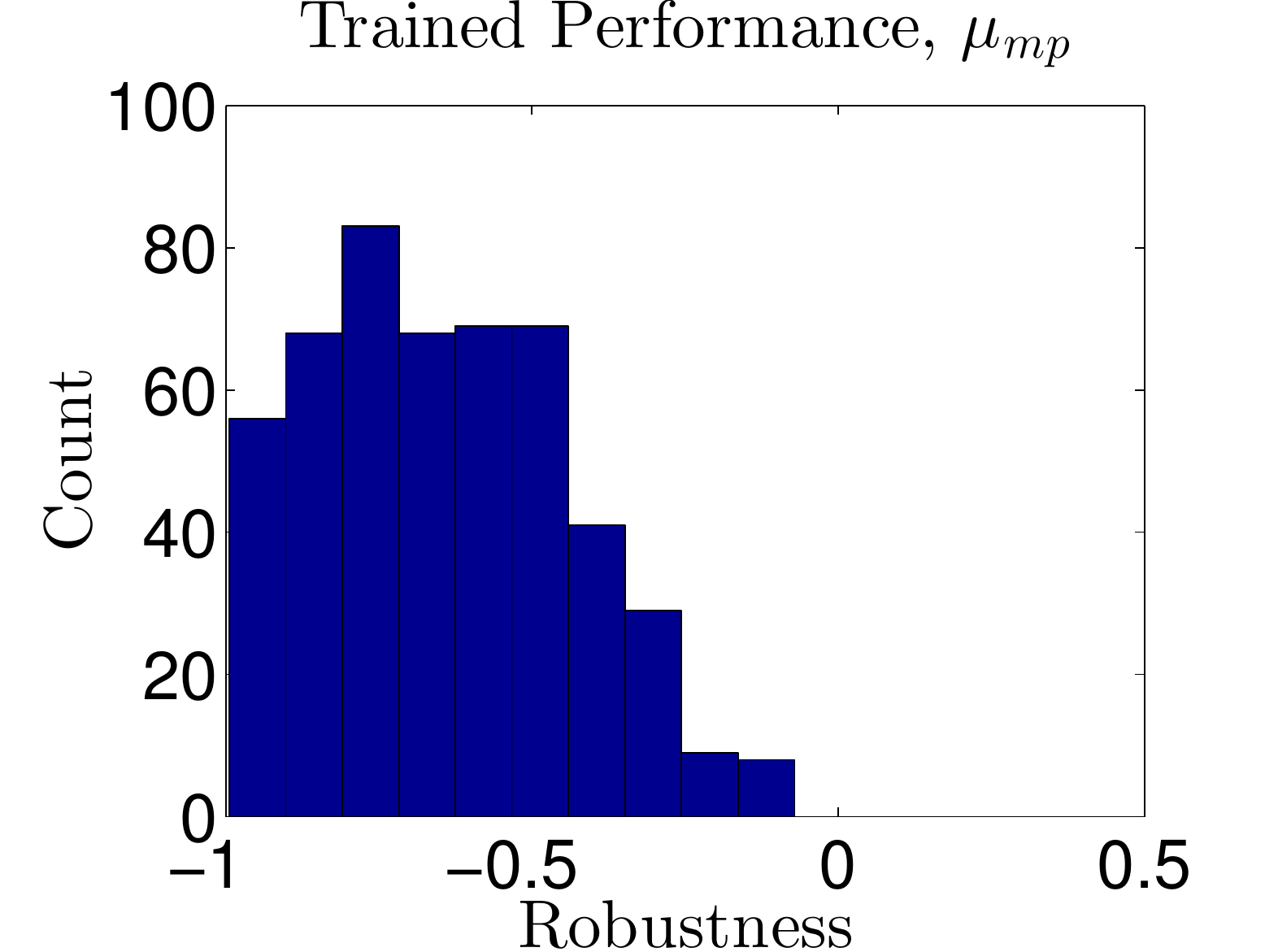} & \includegraphics[width=0.43\columnwidth]{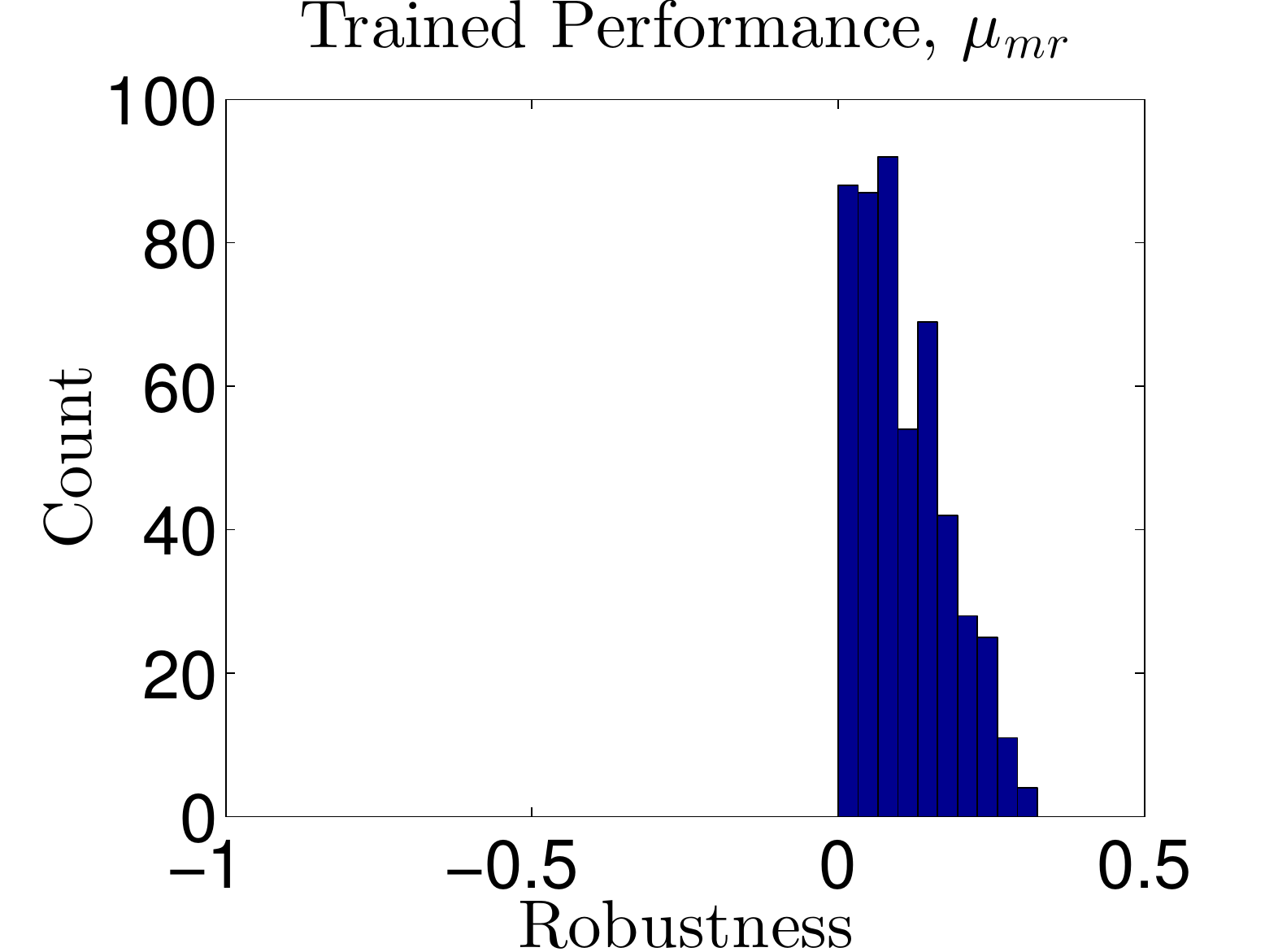} \\
 (a) & (b) \\
 \includegraphics[width=0.43\columnwidth]{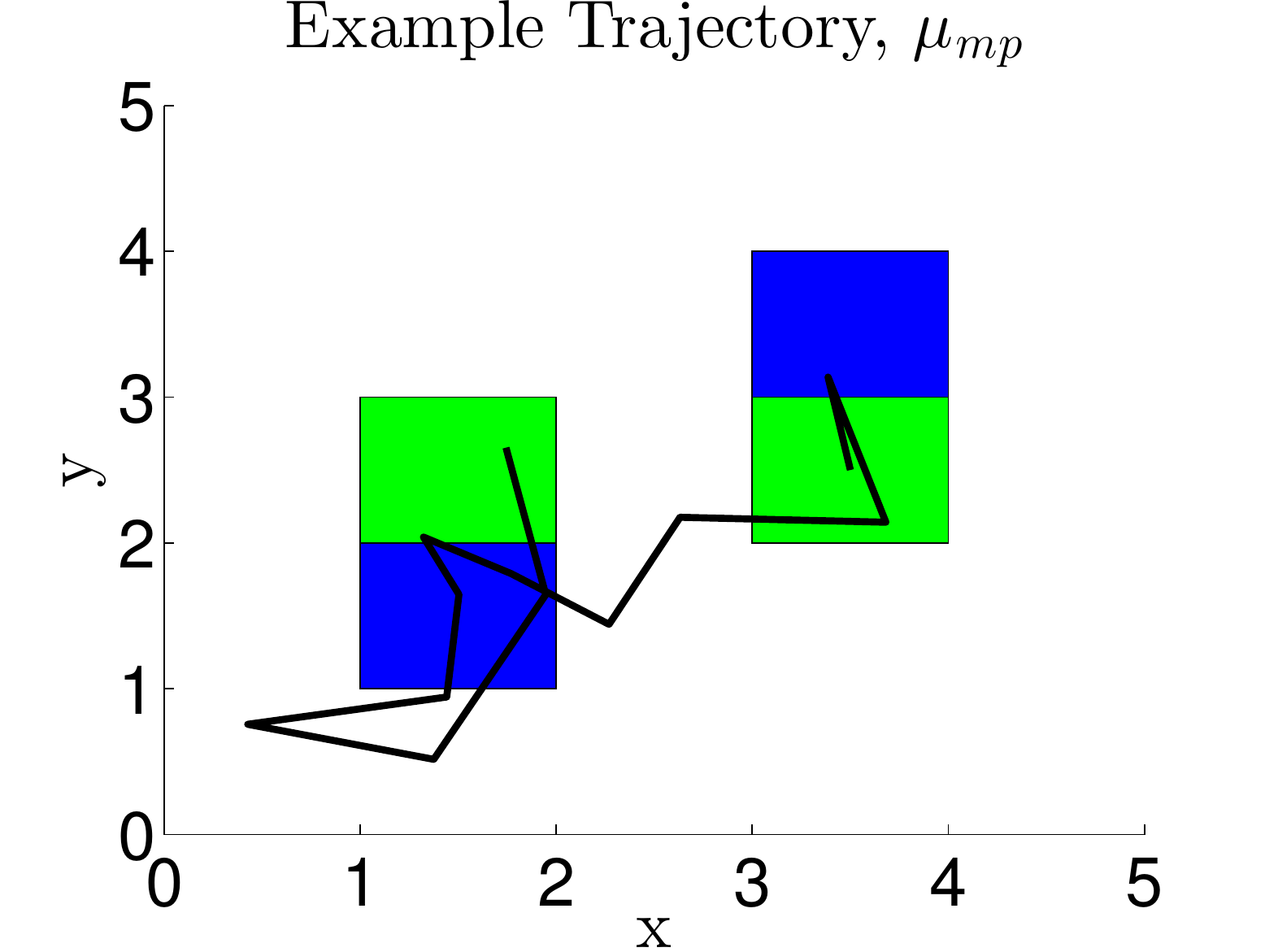} & \includegraphics[width=0.43\columnwidth]{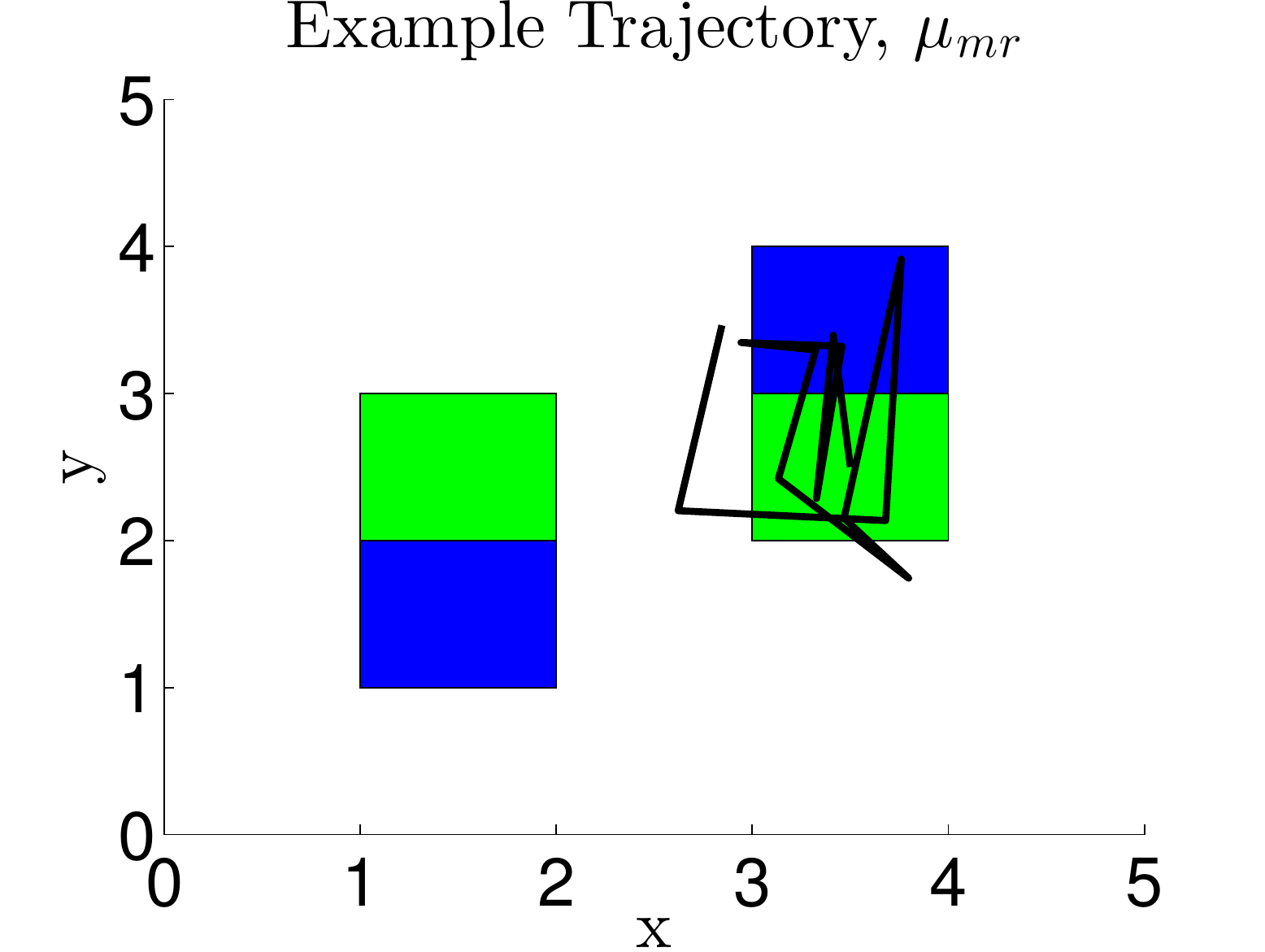} \\
 (c) & (d) \\
\end{tabular}
\caption{\label{caseStudy2} \small Comparison of Policies for Case
  Study 2..
  The subplots have the same meaning as in Figure \ref{caseStudy1}.}
\end{figure}

In the first row, we see that the solution to Problem \ref{maxProb}
satisfies the formula with probability 0 while the solution to Problem
\ref{maxRob} satisfies the formula with probability 1.  At first, this
seems counterintuitive, as Proposition \ref{Qlearnconverge} indicates
that a policy that maximizes probability would achieve a probability
of satisfaction at least as high as the policy that maximizes the
expected robustness. However, this is only guaranteed with an infinite
number of learning trials.  The performance in terms of robustness is obviously better
for the robustness maximization (mean  0.1052, standard deviation
 0.0742) than for the probability maximization (mean  -0.6432, standard
deviation 0.2081). In the second row, we see that the maximum robustness 
policy enforces convergence to a cycle between two regions, while the maximum probability
policy deviates from this cycle.

The discrepancy between the two solutions can be explained by what
happens when trajectories that almost satisfy \eqref{caseStudy2spec}
occur. If a trajectory that almost oscillates between
a blue and green region every four seconds is encountered when solving
Problem \ref{maxProb}, it collects 0 reward.  On the other
hand, when solving Problem \ref{maxRob}, the policy that produces the
almost oscillatory trajectory will be reinforced much more strongly,
as the resulting robustness is less negative.   However, since the robustness degree gives ``partial
credit'' for trajectories that are close to satisfying the policy, the
reinforcement learning algorithm performs a directed search to find
policies that satisfy the formula.  Since probability maximization
gives no partial credit, the reinforcement learning algorithm is
essentially performing a random search until it encounters a
trajectory that satisfies the given formula.  Therefore, if the family
of policies  that satisfy the formula with positive probability is small,
it will on average take the $Q$-learning algorithm solving Problem
\ref{maxProb} a longer time to converge to a solution that enforces
formula satisfaction.

\section{Conclusions and Future Work}
\label{conclusion}
In this paper, we presented a new reinforcement learning paradigm to enforce temporal logic specifications when
the dynamics of the system are \em a priori \em unknown.  In contrast to existing works
on this topic, we use a logic (signal temporal logic) whose formulation
is directly related to a system's statespace. We present a novel, convergent $Q$-learning algorithm that uses the robustness
degree, a continuous measure of how well a trajectory satisfies
a formula, to enforce the given specification.  In certain cases, robustness maximization
subsumes the established paradigm of probability maximization and, in certain cases, 
robustness maximization performs better in terms of both probability and robustness under partial training.
Future research includes formally connecting our approach to abstractions
of linear stochastic systems.

% \addtolength{\textheight}{-12cm}   % This command serves to balance the column lengths
                                  % on the last page of the document manually. It shortens
                                  % the textheight of the last page by a suitable amount.
                                  % This command does not take effect until the next page
                                  % so it should come on the page before the last. Make
                                  % sure that you do not shorten the textheight too much.
                                  
\bibliographystyle{abbrv}
\bibliography{../references}

\end{document}